\newtheorem{assumption}{Assumption}
\newtheorem{remark}{Remark}
\newtheorem{lemma}{Lemma}
\newtheorem{corollary}{Corollary}
\newtheorem{theorem}{Theorem}
\newtheorem{definition}{Definition}
\g@addto@macro\normalsize{%
  \setlength{\abovedisplayskip}{10.0pt plus 2.0pt minus 11.0pt}%
  \setlength{\belowdisplayskip}{10.0pt plus 2.0pt minus 11.0pt}%
  \setlength{\abovedisplayshortskip}{10.0pt plus 2.0pt minus 10.0pt}%
  \setlength{\belowdisplayshortskip}{10.0pt plus 2.0pt minus 10.0pt}%
  \setlength{\intextsep}{5pt}
}
\title{Set-Membership Filtering-Based Leader-Follower Synchronization of Discrete-time Linear Multi-Agent Systems}
\author{Diganta Bhattacharjee \affiliation{PhD student, Department of Mechanical and Aerospace Engineering, \\ The University of Texas at Arlington, Arlington, TX, 76019, USA. \\ email: diganta.bhattacharjee@mavs.uta.edu} 
}
\author{Kamesh Subbarao \affiliation{Professor, Department of Mechanical and Aerospace Engineering, \\ The University of Texas at Arlington, Arlington, TX, 76019, USA. \\ email: subbarao@uta.edu} 
}
\begin{document}

\maketitle

\begin{abstract}
{\it In this paper, a set-membership filtering-based leader-follower synchronization protocol for discrete-time linear multi-agent systems is proposed wherein the aim is to make the agents synchronize with a leader. The agents, governed by identical high-order discrete-time linear dynamics, are subject to unknown-but-bounded input disturbances. In terms of its own state information, each agent only has access to measured outputs that are corrupted with unknown-but-bounded output disturbances. Also, the initial states of the agents are unknown. To deal with all these unknowns (or uncertainties), a set-membership filter (or state estimator), having the `correction-prediction' form of a standard Kalman filter, is formulated. We consider each agent to be equipped with this filter that estimates the state of the agent and consider the agents to be able to share the state estimate information with the neighbors locally. The corrected state estimates of the agents are utilized in the local control law design for synchronization. Under appropriate conditions, the global disagreement error between the agents and the leader is shown to be bounded. An upper bound on the norm of the global disagreement error is calculated and shown to be monotonically decreasing. Finally, two simulation examples are included to illustrate the effectiveness of the proposed set-membership filter and the proposed leader-follower synchronization protocol. 
}
\end{abstract}

\noindent \textbf{Keywords:} Multi-agent systems, cooperative control, synchronization, set-membership filtering, optimization.

\vspace{-0.5cm}
\section{Introduction}
Cooperative control of multi-agent systems can be applied to solve a number of engineering problems and has attracted much attention in the last few decades. The applications of cooperative control include distributed task assignment and consensus problems, formation flight of spacecrafts and aerial vehicles, distributed estimation problems and so on (see, for example, Refs. \cite{Ren_Beard_2008, Fax_Murray_2004_TAC, Murray_2007, Li_et_al_2009, Wu_et_al_IFAC_2014, Rajnish_ACC_2019}). All of these applications typically require some degree of cooperation and synchronization among the agents. In the context of synchronization (or consensus), there are several types of problems that have been investigated in the existing literature. These are (a) synchronization without a leader (see, for example, Refs. \cite{Trentelman_et_al_2013_TAC, Wang_et_al_2015_TAC}), (b) leader-follower synchronization (see, for example, Refs. \cite{Back_Kim_2017_TAC, Li_et_al_2014, Lewis_et_al_2015_TAC, Arabi_et_al_2017}), (c) average consensus (see, for example, Refs. \cite{Silvestre_et_al_2013_ACC, Silvestre_et_al_2014_ACC}), (d) bipartite consensus (see, for example, Ref. \cite{Valcher_Mishra_2014}) and so on. In this paper, we focus on leader-follower synchronization in the presence of a leader that pins to a group of agents, all having high-order discrete-time linear (time-invariant) dynamics.
\vspace{-0.5cm}
\subsection{Motivation}
Most of the studies in the existing literature regarding multi-agent synchronization assume perfect modeling of the system, i.e., the mathematical description of the system is assumed to be perfect (see, for example, Refs. \cite{Wang_et_al_2015_TAC, Li_et_al_2014, Lewis_et_al_2015_TAC, Hengster-Movric_et_al_2013_Automatica, Zhang_et_al_2011_TAC, Arabi_et_al_2017}). However, this is incompatible with real-world engineering problems where the dynamical system is subject to unknown input disturbances, parametric uncertainties, unmodeled dynamics, etc. Because of these phenomena, the state of the system can not be known precisely and should be considered uncertain. For an overview on synchronization in uncertain multi-agent systems, see Refs. \cite{Rajnish_ACC_2019, Trentelman_et_al_2013_TAC, Back_Kim_2017_TAC, Peng_2013} and references therein. Among the abovementioned phenomena, we focus on unknown input disturbances in this paper. Apart from the perfect system modeling assumption, perfect information regarding the states of the agents (full-state feedback) are assumed to be available for synchronization protocol design in a large number of studies (see, for example, Refs. \cite{Hengster-Movric_et_al_2013_Automatica, Li_et_al_2014, Wang_et_al_2015_TAC}). Again, this assumption does not hold for practical applications where measured outputs (some function of the states), subject to output disturbances, are available. Although observer-based approaches, without considering output disturbances, have been investigated in the literature (see, for example, Refs. \cite{Li_et_al_2009, Zhang_et_al_2011_TAC, Back_Kim_2017_TAC}), a state estimation or filtering-based approach would be more suitable to address the effects of both input and output disturbances in the synchronization problem (see, for example, Ref. \cite{Wu_et_al_IFAC_2014}). 
 
The Kalman filter \cite{Anderson_Moore_1979}, which is one of the most widely-studied stochastic filtering techniques, assumes that the input and output disturbances are Gaussian noises with known statistical properties. However, this assumption is difficult to validate in practice and might not hold for real-world systems. Therefore, it seems more realistic to assume the disturbances to be unknown-but-bounded \cite{Polyak_et_al_2004, Chabane_et_al_IFAC_2014}. This approach leads to the concept of set-membership or set-valued state estimation (or filtering) (see, for example, Refs. \cite{Shamma_Tu_1997, El_Ghaoui_Calafiore_2001, Polyak_et_al_2004, Yang_Li_IEEE_Trans_Auto_Control_2009, Yang_Li_Automatica_2009, Chabane_et_al_IFAC_2014, Wei_et_al_2015}), which is deterministic and more suited to several practical applications \cite{Yang_Li_IEEE_Trans_Auto_Control_2009}. Although the set-theoretic or set-valued concepts for synchronization have been investigated in the existing literature (see, for example, Refs. \cite{Xiao_Wang_2012_SIAM, Munz_et_al_2011_TAC, Garulli_Giannitrapani_2011, Sadikhov_et_al_2017}), studies explicitly utilizing set-membership or set-valued estimation techniques for the multi-agent synchronization problem are relatively rare (see, for instance, Refs. \cite{Silvestre_et_al_2013_ACC, Silvestre_et_al_2014_ACC}), despite the practical significance of this class of estimators/filters. Recently, a leader-follower synchronization protocol using set-membership estimation techniques was put forward in Ref. \cite{Ge_et_al_2016}. The synchronization objective in Ref. \cite{Ge_et_al_2016} was to construct ellipsoids centered at the leader's state trajectory that contained states of the agents. This, however, is different from the concept of conventional leader-follower synchronization where the objective is to make the states of the agents converge to the leader's state trajectory. To the best of the authors' knowledge, set-membership estimation techniques have not been employed for the conventional leader-follower synchronization problem in the existing literature.
\vspace{-0.5cm}
\subsection{Technical Approach and Contributions}
The technical approach and contributions of the paper are summarized in the following list.
\begin{itemize} [label=$\bullet$]
\item We develop a set-membership estimation-based (conventional) leader-follower synchronization protocol for high-order discrete-time linear multi-agent systems with the agents subject to unknown-but-bounded input and output disturbances. To the best of our knowledge, this is a novel contribution.
\item Specifically, we focus on the ellipsoidal state estimation problem and adopt the terminology set-membership filter (SMF). Based on the approaches given in Refs. \cite{El_Ghaoui_Calafiore_2001, Yang_Li_Automatica_2009, Yang_Li_IEEE_Trans_Auto_Control_2009, Wei_et_al_2015}, we convert the set estimation problem into a recursive algorithm that requires solutions to two semi-definite programs (SDPs) at each time-step. 
\item We consider each agent to be equipped with the SMF that estimates the state of the agent. Further, we assume that the agents are able to share the state estimate information with the neighbors locally and that information is utilized in the local control synthesis for synchronization. The local controller for each agent is chosen based on an $H_2$ type Riccati-based approach \cite{Hengster-Movric_et_al_2013_Automatica}. We show that the global error system is input-to-state stable (ISS) with respect to the input disturbances and estimation errors. Sufficient conditions for input-to-state stability are provided in terms of the system matrices of the agents, the Riccati design, and the interaction graph. 
\item Further, we calculate an upper bound on the norm of the global disagreement error and show that it decreases monotonically, converging to a limit as time goes to infinity.
\end{itemize}
\vspace{-0.2cm}
The rest of this paper is organized as follows. Section \ref{Preliminaries} describes the preliminaries required for the SMF design. 
The formulation of the SMF is given in Section \ref{Set-Membership Filter Design}. The control input synthesis and related results for synchronization are given in Section \ref{SMF based synchronization}. Finally, Section \ref{Simulation Example} includes the simulation examples and Section \ref{Conclusion} presents the concluding remarks.
\vspace{-0.3cm}
\subsubsection*{Notations and definitions}
The symbol $\mathbb{Z}_{\star}$ denotes the set of non-negative integers. For a square matrix $\bm{X}$, the notation $\bm{X} > 0$ (respectively, $\bm{X} \geq 0$) means $\bm{X}$ is symmetric and positive definite (respectively, positive semi-definite). Similarly, $\bm{X} < 0$ (respectively, $\bm{X} \leq 0$) means $\bm{X}$ is symmetric and negative definite (respectively, negative semi-definite). Further, $\rho(\bm{X})$ denotes the spectral radius of a square matrix $\bm{X}$. 
For any matrix $\bm{Y}$, $\sigma_{\text{max}} (\bm{Y})$ stands for the maximum singular value of $\bm{Y}$. $C(a,b)$ denotes an open circle of radius $b$ in the complex plane, centered at $a \in \mathbb{R}$. Notations $\text{diag}(\cdot)$, $\bm{I}_n$, $\bm{O}_n$, $\bm{1}_{n}$, and $\bm{0}_{n}$ denote block-diagonal matrices, the $n \times n$ identity matrix, the $n \times n$ null matrix, and the vector of ones and zeros of dimension $n$, respectively. For vectors $\bm{x}_1, \bm{x}_2, \dots, \bm{x}_M$, we have $\text{col} [\bm{x}_1, \bm{x}_2, \dots, \bm{x}_M] = [\bm{x}_1^\text{T} \ \bm{x}_2^\text{T} \ \dots \ \bm{x}_M^\text{T}]^\text{T}$. The symbol $|\cdot|$ denotes standard Euclidean norm for vectors and induced matrix norm for matrices. For any function $\bm{\theta}: \mathbb{Z}_\star \rightarrow \mathbb{R}^n$, we have $||\bm{\theta}|| = \text{sup} \{ |\bm{\theta}_k| : k \in \mathbb{Z}_\star \}$. This is the standard $l_{\infty}$ norm for a bounded $\bm{\theta}$. Ellipsoids are denoted by $\mathcal{E}(\bm{c}, \bm{P}) = \{ \bm{x} \in \mathbb{R}^n :  (\bm{x} - \bm{c})^{\text{T}} \bm{P}^{-1} (\bm{x} - \bm{c}) \leq 1 \}$ where $\bm{c} \in \mathbb{R}^n$ is the center of the ellipsoid and $\bm{P} > 0$ is the shape matrix that characterizes the orientation and `size' of the ellipsoid in $\mathbb{R}^n$. Notations $\text{trace} (\cdot)$, $\text{rank} (\cdot)$ denote trace and rank of a matrix, respectively, and $\otimes$ denotes the Kronecker product. The superscript $\text{T}$ means vector or matrix transpose.
\begin{definition}[\cite{Jiang_Wang_2001_Automatica, Lazar_et_al_2013_TAC}]
A function $\gamma : \mathbb{R}_{\geq 0} \rightarrow \mathbb{R}_{\geq 0}$ is a class $\mathcal{K}$ function if it is continuous, strictly increasing and $\gamma(0) = 0$. A function $\beta: \mathbb{R}_{\geq 0} \times \mathbb{R}_{\geq 0}  \rightarrow \mathbb{R}_{\geq 0}$ is a class $\mathcal{KL}$ function if, for each fixed $t \geq 0$, the function $\beta(\cdot,t)$ is a class $\mathcal{K}$ function and for each fixed $s \geq 0$, the function $\beta(s,\cdot)$ is decreasing and $\beta(s,t) \rightarrow 0$ as $t \rightarrow \infty$.
\end{definition}
\vspace{-0.5cm}
\section{Preliminaries}  \label{Preliminaries}
Consider the discrete-time dynamical systems of the form
\begin{equation} \label{discrete-time linear dynamics}
\begin{split}
\bm{x}_{k+1} &= \bm{A}_k \bm{x}_k + \bm{B}_k \bm{u}_k + \bm{G}_k \bm{w}_k, \\ 
\bm{y}_k &= \bm{C}_k \bm{x}_k + \bm{D}_k \bm{v}_k, \quad k \in \mathbb{Z}_{\star}
\end{split}
\end{equation}
where $\bm{x}_k \in \mathbb{R}^{\bar{n}}$ is the state, $\bm{u}_k \in \mathbb{R}^{\bar{m}}$ is the control input, $\bm{w}_k \in \mathbb{R}^{\bar{w}}$ is the input disturbance, $\bm{y}_k \in \mathbb{R}^{\bar{p}}$ is the measured output, $\bm{v}_k \in \mathbb{R}^{\bar{v}}$ is the output disturbance. Also, $\bm{A}_k$, $\bm{B}_k$, $\bm{G}_k$, $\bm{C}_k$ and $\bm{D}_k$ are system matrices of appropriate dimensions. Following are the assumptions for systems of the form given in Eq. \eqref{discrete-time linear dynamics}.
\begin{assumption} \label{Assumption 1} 
\begin{enumerate}[label= 1.\arabic*]
\item \label{Assumption 1 - initial estimate and ellipsoid} The initial state $\bm{x}_0$ is unknown. However, it satisfies $\bm{x}_0 \in \mathcal{E} (\hat{\bm{x}}_0, \bm{P}_0)$ where $\hat{\bm{x}}_0$ is a given initial estimate and $\bm{P}_0$ is known. 
\item \label{Assumption 1 - process and measurement noise ellipsoids} $\bm{w}_k$ and $\bm{v}_k$ are unknown-but-bounded for all $k \in \mathbb{Z}_{\star}$. Also, $\bm{w}_k \in \mathcal{E}(\bm{0}_{\bar{w}}, \bm{Q}_k)$ and $\bm{v}_k \in \mathcal{E}(\bm{0}_{\bar{v}}, \bm{R}_k)$ for all $k \in \mathbb{Z}_{\star}$ where $\bm{Q}_k$, $\bm{R}_k$ are known.
\end{enumerate}
\end{assumption}
We intend to develop an SMF for systems of the form in Eq. \eqref{discrete-time linear dynamics}, having a correction-prediction structure  similar to the Kalman filter variants (see, for example, Ref. \cite{Anderson_Moore_1979}). Note that the SMF design in this paper is motivated by the SMF developed by the authors in Ref. \cite{Bhattacharjee_Subbarao}. The filtering objectives are as follows where the corrected and predicted state estimates at time-step $k$ are denoted by $\hat{\bm{x}}_{k|k}$ and $\hat{\bm{x}}_{k+1|k}$, respectively \cite{Bhattacharjee_Subbarao}.
\subsection{Correction Step} 
At each time-step $k \in \mathbb{Z}_{\star}$, upon receiving the measured output $\bm{y}_{k}$ with $\bm{v}_{k} \in \mathcal{E} (\bm{0}_{\bar{v}}, \bm{R}_{k})$ and given $\bm{x}_{k} \in \mathcal{E} (\hat{\bm{x}}_{k|k-1}, \bm{P}_{k|k-1})$, the objective is to find the optimal correction ellipsoid, characterized by $\hat{\bm{x}}_{k|k}$ and $\bm{P}_{k|k}$, such that $\bm{x}_{k} \in \mathcal{E} (\hat{\bm{x}}_{k|k}, \bm{P}_{k|k})$. The corrected state estimate is given by 
\begin{equation} \label{objective-corrected state estimate}
\hat{\bm{x}}_{k|k} = \hat{\bm{x}}_{k|k-1} + \bm{L}_{k} (\bm{y}_{k} - \bm{C}_k \hat{\bm{x}}_{k|k-1})
\end{equation}
where $\bm{L}_k$ is the filter gain. Since $\bm{x}_{k} \in \mathcal{E} (\hat{\bm{x}}_{k|k-1}, \bm{P}_{k|k-1})$, there exists a $\bm{z}_{k|k-1} \in \mathbb{R}^{\bar{n}}$ with $|\bm{z}_{k|k-1}|\leq 1$ such that
\begin{equation} \label{True state-prediction step}
\bm{x}_k = \hat{\bm{x}}_{k|k-1} + \bm{E}_{k|k-1} \bm{z}_{k|k-1}
\end{equation}
where $\bm{E}_{k|k-1}$ is the Cholesky factorization of $\bm{P}_{k|k-1}$, i.e., $\bm{P}_{k|k-1} = \bm{E}_{k|k-1} \bm{E}_{k|k-1}^\textnormal{T}$ \cite{El_Ghaoui_Calafiore_2001, Yang_Li_IEEE_Trans_Auto_Control_2009}.
\subsection{Prediction Step}
At each time-step $k \in \mathbb{Z}_{\star}$, given $\bm{x}_{k} \in \mathcal{E} (\hat{\bm{x}}_{k|k}, \bm{P}_{k|k})$ and $\bm{w}_k \in \mathcal{E} (\bm{0}_{\bar{w}}, \bm{Q}_k)$, the objective is to find the optimal prediction ellipsoid, characterized by $\hat{\bm{x}}_{k+1|k}$ and $\bm{P}_{k+1|k}$, such that $\bm{x}_{k+1} \in \mathcal{E} (\hat{\bm{x}}_{k+1|k}, \bm{P}_{k+1|k})$ where the predicted state estimate is given by 
\begin{equation} \label{objective-predicted state estimate}
\hat{\bm{x}}_{k+1|k} = \bm{A}_k \hat{\bm{x}}_{k|k} + \bm{B}_k \bm{u}_k 
\end{equation}
Again, since $\bm{x}_{k} \in \mathcal{E} (\hat{\bm{x}}_{k|k}, \bm{P}_{k|k})$, we have
\begin{equation} \label{True state-correction ellipsoid}
\bm{x}_k = \hat{\bm{x}}_{k|k} + \bm{E}_{k|k} \bm{z}_{k|k}
\end{equation}
where $\bm{P}_{k|k} = \bm{E}_{k|k} \bm{E}_{k|k}^\textnormal{T}$ and $|\bm{z}_{k|k}| \leq 1$. Initialization is provided by $\hat{\bm{x}}_{0|-1} = \hat{\bm{x}}_0$ and $\bm{P}_{0|-1} = \bm{P}_0$ \cite{Anderson_Moore_1979}.
\begin{remark}
As mentioned in the filtering objectives, we are interested in finding the optimal ellipsoids, i.e., the minimum-`size' ellipsoids, at each time-step. There are two criteria for the `size' of an ellipsoid in terms of its shape matrix: trace criterion and log-determinant criterion \cite{El_Ghaoui_Calafiore_2001}. In this paper, we have considered the trace criterion (see Theorems \ref{Theorem: Correction step} and \ref{Theorem: Prediction step}) which represents the sum of squared lengths of semi-axes of an ellipsoid \cite{El_Ghaoui_Calafiore_2001}. As a result, the corresponding optimization problems are convex (see the SDPs in Eqs. \eqref{The complete problem statement-1} and \eqref{The complete problem statement-2}). 
Alternatively, for minimum-volume ellipsoids, one can consider the log-determinant criterion. 
However, this would render the optimization problems non-convex and additional modifications might be required to restore convexity (see, for example, Ref. \cite{El_Ghaoui_Calafiore_2001}).
\end{remark}
\section{Set-Membership Filter Design} \label{Set-Membership Filter Design}
In this section, we formulate the SDPs to be solved at each time-step for the SMF. As the SMF design is motivated by the one in \cite{Bhattacharjee_Subbarao}, we have adopted the notations and relevant statements provided in \cite{Bhattacharjee_Subbarao}. First, we state the result that summarizes the filtering problem at the correction step.
\begin{theorem}  \label{Theorem: Correction step}
Consider the system in Eq. \eqref{discrete-time linear dynamics} under the Assumption \ref{Assumption 1}. Then, at each time-step $k \in \mathbb{Z}_{\star}$, upon receiving the measured output $\bm{y}_{k}$ with $\bm{v}_{k} \in \mathcal{E} (\bm{0}_{\bar{v}}, \bm{R}_{k})$ and given $\bm{x}_{k} \in \mathcal{E} (\hat{\bm{x}}_{k|k-1}, \bm{P}_{k|k-1})$, the state $\bm{x}_{k}$ is contained in the optimal correction ellipsoid given by $\mathcal{E} (\hat{\bm{x}}_{k|k}, \bm{P}_{k|k})$, if there exist $\bm{P}_{k|k} > 0$, $\bm{L}_{k}$, $\tau_i \geq 0, \ i=1, 2$ as solutions to the following SDP:
\begin{equation} \label{The complete problem statement-1}
\begin{split} 
& \min_{\bm{P}_{k|k}, \bm{L}_{k}, \tau_{1},\tau_{2}} \hspace{0.2cm} \rm{trace}(\bm{P}_{k|k}) \\
& \textnormal{subject to} \\
& \bm{P}_{k|k} > 0 \\
& \tau_i \geq 0, \ i = 1, 2 \\
& \begin{bmatrix}
-\bm{P}_{k|k} & \bm{\Pi}_{k|k-1} \\ \\
\bm{\Pi}^\textnormal{T}_{k|k-1} & -\bm{\Theta} (\tau_{1},\tau_{2})
\end{bmatrix} \leq 0
\end{split}
\end{equation} 
where $\bm{\Pi}_{k|k-1}$ and $\bm{\Theta} (\tau_{1},\tau_{2})$ are given by
\begin{equation} \label{Pi_k_k-1 and Theta(tau_1,...tau_6)}
\begin{split}  
\bm{\Pi}_{k|k-1} &= \Big [\bm{0}_{\bar{n}} \quad (\bm{E}_{k|k-1} - \bm{L}_{k} \bm{C}_k \bm{E}_{k|k-1}) \quad - \bm{L}_{k} \bm{D}_k \Big], \\ 
\bm{\Theta} (\tau_1, \tau_{2}) &=  \textnormal{diag} \hspace{0.1cm} (1- \tau_1 - \tau_{2}, \tau_{1} \bm{I}_{\bar{n}}, \tau_{2} \bm{R}_{k}^{-1})
\end{split}
\end{equation}
Furthermore, the center of the correction ellipsoid is given by the corrected state estimate in Eq. \eqref{objective-corrected state estimate}.  
\end{theorem}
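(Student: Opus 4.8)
The plan is to recast the ellipsoidal containment $\bm{x}_k \in \mathcal{E}(\hat{\bm{x}}_{k|k}, \bm{P}_{k|k})$ as a single quadratic inequality in a stacked uncertainty vector and then discharge it with the S-procedure followed by a Schur complement. First I would compute the correction error in closed form. Inserting the output equation $\bm{y}_k = \bm{C}_k\bm{x}_k + \bm{D}_k\bm{v}_k$ into Eq.~\eqref{objective-corrected state estimate} and using Eq.~\eqref{True state-prediction step} to substitute $\bm{x}_k - \hat{\bm{x}}_{k|k-1} = \bm{E}_{k|k-1}\bm{z}_{k|k-1}$, I obtain
\[
\bm{x}_k - \hat{\bm{x}}_{k|k} = (\bm{I}_{\bar{n}} - \bm{L}_k\bm{C}_k)\bm{E}_{k|k-1}\bm{z}_{k|k-1} - \bm{L}_k\bm{D}_k\bm{v}_k .
\]
Collecting the free quantities into $\bm{\xi} = \text{col}[1,\ \bm{z}_{k|k-1},\ \bm{v}_k]$, this is exactly $\bm{x}_k - \hat{\bm{x}}_{k|k} = \bm{\Pi}_{k|k-1}\,\bm{\xi}$, where the three column blocks of $\bm{\Pi}_{k|k-1}$ in Eq.~\eqref{Pi_k_k-1 and Theta(tau_1,...tau_6)} are reproduced term by term (the leading $\bm{0}_{\bar{n}}$ block being paired with the fixed scalar entry $1$).

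Next I would turn the target and the hypotheses into quadratic forms in $\bm{\xi}$. The containment $\bm{x}_k \in \mathcal{E}(\hat{\bm{x}}_{k|k}, \bm{P}_{k|k})$ reads $\bm{\xi}^\text{T}\bm{\Pi}_{k|k-1}^\text{T}\bm{P}_{k|k}^{-1}\bm{\Pi}_{k|k-1}\bm{\xi} \le 1$, while the two available bounds---$|\bm{z}_{k|k-1}| \le 1$ from Eq.~\eqref{True state-prediction step} (Assumption~\ref{Assumption 1 - initial estimate and ellipsoid}) and $\bm{v}_k^\text{T}\bm{R}_k^{-1}\bm{v}_k \le 1$ (Assumption~\ref{Assumption 1 - process and measurement noise ellipsoids})---become the homogenized constraints $\bm{\xi}^\text{T}\,\text{diag}(-1,\ \bm{I}_{\bar{n}},\ \bm{O}_{\bar{v}})\,\bm{\xi} \le 0$ and $\bm{\xi}^\text{T}\,\text{diag}(-1,\ \bm{O}_{\bar{n}},\ \bm{R}_k^{-1})\,\bm{\xi} \le 0$. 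These encodings are faithful precisely because the first component of $\bm{\xi}$ is fixed to $1$, so each leading $-1$ accounts for the right-hand side of the corresponding bound.

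The central step is the S-procedure. A sufficient condition for $\bm{\xi}^\text{T}\bm{\Pi}_{k|k-1}^\text{T}\bm{P}_{k|k}^{-1}\bm{\Pi}_{k|k-1}\bm{\xi} - 1 \le 0$ to hold at every $\bm{\xi}$ meeting the two constraints is the existence of multipliers $\tau_1,\tau_2 \ge 0$ such that
\[
\bm{\Pi}_{k|k-1}^\text{T}\bm{P}_{k|k}^{-1}\bm{\Pi}_{k|k-1} - \bm{\Theta}(\tau_1,\tau_2) \le 0,
\]
and aggregating the constant, $\bm{z}_{k|k-1}$-, and $\bm{v}_k$-blocks reproduces exactly $\bm{\Theta}(\tau_1,\tau_2) = \text{diag}(1-\tau_1-\tau_2,\ \tau_1\bm{I}_{\bar{n}},\ \tau_2\bm{R}_k^{-1})$ of Eq.~\eqref{Pi_k_k-1 and Theta(tau_1,...tau_6)}. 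Since $\bm{P}_{k|k} > 0$, a Schur complement about the block $-\bm{P}_{k|k}$ converts this inequality---nonlinear in $\bm{P}_{k|k}^{-1}$---into the linear matrix inequality displayed in Eq.~\eqref{The complete problem statement-1}; minimizing $\text{trace}(\bm{P}_{k|k})$ then picks out the smallest correction ellipsoid in the trace sense. Feasibility of the SDP therefore implies the desired containment, and the center is read off from Eq.~\eqref{objective-corrected state estimate}.

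I expect the main obstacle to be conceptual rather than computational: with two constraints the S-procedure is only sufficient (it is lossless merely in the single-constraint Yakubovich case), so the result must be stated as the one-way implication ``SDP feasible $\Rightarrow$ containment'' rather than an equivalence, and I would invoke only that direction. I would also take care that the homogenization is legitimate---the matrix inequality holds for all $\bm{\xi}$, and the actual error vector has first coordinate exactly $1$, so the transfer is valid. The remaining tasks, namely verifying the block bookkeeping in $\bm{\Pi}_{k|k-1}$ and $\bm{\Theta}(\tau_1,\tau_2)$ and tracking the signs through the Schur complement, are routine.
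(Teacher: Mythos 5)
Your proposal is correct and follows essentially the same route as the paper's own proof: the identical error decomposition $\bm{x}_k - \hat{\bm{x}}_{k|k} = \bm{\Pi}_{k|k-1}\bm{\zeta}$ with the stacked vector $\text{col}[1, \bm{z}_{k|k-1}, \bm{v}_k]$, the same homogenized quadratic constraints, the S-procedure with multipliers $\tau_1, \tau_2 \geq 0$ yielding $\bm{\Pi}_{k|k-1}^\text{T}\bm{P}_{k|k}^{-1}\bm{\Pi}_{k|k-1} - \bm{\Theta}(\tau_1,\tau_2) \leq 0$, and the Schur complement to obtain the LMI, followed by trace minimization. Your added remarks on the one-way (sufficiency-only) nature of the multi-constraint S-procedure and the legitimacy of the homogenization are consistent with, and slightly more explicit than, the paper's treatment.
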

\begin{proof}
Using Eqs. \eqref{discrete-time linear dynamics}, \eqref{objective-corrected state estimate}, and \eqref{True state-prediction step}, we have
\begin{equation} \label{estimation error at the correction step}
\begin{split}
\hspace{-0.18cm} \bm{x}_{k} - \hat{\bm{x}}_{k|k} &= (\bm{x}_{k} - \hat{\bm{x}}_{k|k-1}) -  \bm{L}_{k} (\bm{y}_{k} - \bm{C}_k \hat{\bm{x}}_{k|k-1})  \\
					 &= (\bm{E}_{k|k-1} - \bm{L}_{k} \bm{C}_k \bm{E}_{k|k-1}) \bm{z}_{k|k-1} - \bm{L}_{k} \bm{D}_k \bm{v}_{k}
\end{split}
\end{equation}
Next, we define $\bm{\zeta} = \text{col} [1, \bm{z}_{k|k-1}, \bm{v}_{k}]$. Therefore, Eq. \eqref{estimation error at the correction step} can be expressed in terms of $\bm{\zeta}$ as
\begin{equation}
\bm{x}_{k} - \hat{\bm{x}}_{k|k} = \bm{\Pi}_{k|k-1} \bm{\zeta}
\end{equation}
where $\bm{\Pi}_{k|k-1}$ is as shown in Eq. \eqref{Pi_k_k-1 and Theta(tau_1,...tau_6)}. Now, $\bm{x}_{k} \in \mathcal{E} (\hat{\bm{x}}_{k|k}, \bm{P}_{k|k})$ is given by 
\begin{equation} \label{State estimation error constraint}
\begin{split}
\bm{\zeta}^{\text{T}}  & \Big[ \bm{\Pi}_{k|k-1}^{\text{T}} \bm{P}_{k|k}^{-1} \bm{\Pi}_{k|k-1} - \text{diag} (1, \bm{O}_{\bar{n}}, \bm{O}_{\bar{v}}) \Big] \bm{\zeta} \leq 0
\end{split} 
\end{equation}
The unknowns in $\bm{\zeta}$ should satisfy the following inequalities: 
\begin{equation}
\begin{split}
\begin{cases}
\bm{z}_{k|k-1}^{\text{T}} \bm{z}_{k|k-1} -1 \leq 0, \\
\bm{v}_{k}^\text{T} \bm{R}_k^{-1} \bm{v}_{k}  -1 \leq 0, \\
\end{cases}
\end{split}
\end{equation}
which can be expressed in terms of $\bm{\zeta}$ as
\begin{equation}   \label{Constraints on the unknown variables}
\begin{split}
\begin{cases}
\bm{\zeta}^{\text{T}} \text{diag} (-1, \bm{I}_{\bar{n}}, \bm{O}_{\bar{v}}) \bm{\zeta} \leq 0 ,\\
\bm{\zeta}^{\text{T}} \text{diag} (-1, \bm{O}_{\bar{n}}, \bm{R}_{k}^{-1}) \bm{\zeta} \leq 0 .
\end{cases}
\end{split}
\end{equation}
Next, the S-procedure (see, for example, Ref. \cite{Boyd_et_al_1994}) is applied to the inequalities in Eqs. \eqref{State estimation error constraint} and \eqref{Constraints on the unknown variables}. Thus, a sufficient condition such that the inequalities given in Eq. \eqref{Constraints on the unknown variables} imply the inequality in Eq. \eqref{State estimation error constraint} to hold is that there exist $\tau_1 \geq 0, \tau_2 \geq 0$ such that the following is true:
\begin{displaymath}
\begin{split}
& \bm{\Pi}_{k|k-1}^{\text{T}} \bm{P}_{k|k}^{-1} \bm{\Pi}_{k|k-1} - \text{diag} (1, \bm{O}_{\bar{n}}, \bm{O}_{\bar{v}}) - \tau_1 \text{diag} (-1, \bm{I}_{\bar{n}}, \bm{O}_{\bar{v}}) \\
& - \tau_2 \text{diag} (-1, \bm{O}_{\bar{n}}, \bm{R}_{k}^{-1}) \leq 0
\end{split}
\end{displaymath}
The above inequality can be expressed in a compact form as
\begin{equation} \label{Inequality-compact form}
\begin{split}
\bm{\Pi}_{k|k-1}^{\text{T}} \bm{P}_{k|k}^{-1} \bm{\Pi}_{k|k-1} - \bm{\Theta} (\tau_1,\tau_2) \leq 0
\end{split}
\end{equation}
where $\bm{\Theta} (\tau_1,\tau_2)$ is as shown in Eq. \eqref{Pi_k_k-1 and Theta(tau_1,...tau_6)}. Using the Schur complement (see, for example, Ref. \cite{Boyd_et_al_1994}), we express the inequality in Eq. \eqref{Inequality-compact form} equivalently as
\begin{equation}   \label{Sufficient condition-correction ellipsoid}
\begin{bmatrix}
-\bm{P}_{k|k} &  \bm{\Pi}_{k|k-1} \\ \\
\bm{\Pi}^\textnormal{T}_{k|k-1} &  -\bm{\Theta} (\tau_1,\tau_2)
\end{bmatrix} \leq 0
\end{equation}
Solving the inequality in Eq. \eqref{Sufficient condition-correction ellipsoid} with $\bm{P}_{k|k} > 0$, $\tau_i \geq 0, \ i = 1, 2 $ yields \textit{a correction ellipsoid} containing the state $\bm{x}_k$. Then, the \textit{optimal correction ellipsoid} is found by minimizing the trace of $\bm{P}_{k|k}$ subject to $\bm{P}_{k|k} > 0$, $\tau_i \geq 0, \ i = 1, 2 $, and Eq. \eqref{Sufficient condition-correction ellipsoid}. This completes the proof. \qed
\end{proof} 
Next, we state the technical result for the prediction step.
\begin{theorem} \label{Theorem: Prediction step}
Consider the system in Eq. \eqref{discrete-time linear dynamics} under the Assumption \ref{Assumption 1} with the state $\bm{x}_k$ in the correction ellipsoid $\mathcal{E} (\hat{\bm{x}}_{k|k}, \bm{P}_{k|k})$ and $\bm{w}_k \in \mathcal{E} (\bm{0}_{\bar{w}}, \bm{Q}_k)$. Then, the successor state $\bm{x}_{k+1}$ belongs to the optimal prediction ellipsoid $\mathcal{E} (\hat{\bm{x}}_{k+1|k}, \bm{P}_{k+1|k})$, if there exist $\bm{P}_{k+1|k} > 0$, $\tau_i \geq 0, \ i=3,4$ as solutions to the following SDP:
\begin{equation} \label{The complete problem statement-2}
\begin{split} 
& \min_{\bm{P}_{k+1|k}, \tau_3, \tau_4} \hspace{0.2cm} \textnormal{trace}(\bm{P}_{k+1|k}) \\
& \textnormal{subject to} \\
& \bm{P}_{k+1|k} > 0 \\
& \tau_i \geq 0, i = 3,4 \\
& \begin{bmatrix}
-\bm{P}_{k+1|k} &  \bm{\Pi}_{k|k} \\ \\
\bm{\Pi}^{\textnormal{T}}_{k|k} &  -\bm{\Psi} (\tau_3,\tau_4)
\end{bmatrix} \leq 0 
\end{split}
\end{equation}
where $\bm{\Pi}_{k|k}$ and $\bm{\Psi} (\tau_3,\tau_4)$ are given by
\begin{displaymath}  
\begin{split}
\bm{\Pi}_{k|k} &= \Big [ \bm{0}_{\bar{n}} \quad \bm{A}_k \bm{E}_{k|k} \quad \bm{G}_k \Big], \\ 
\bm{\Psi} (\tau_3, \tau_4) &= \textnormal{diag} \ (1- \tau_3 - \tau_4, \tau_3 \bm{I}_{\bar{n}}, \tau_4 \bm{Q}_k^{-1})
\end{split}
\end{displaymath}
Furthermore, the center of the prediction ellipsoid is given by the predicted state estimate in Eq. \eqref{objective-predicted state estimate}.  
\end{theorem}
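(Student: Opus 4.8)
The plan is to mirror the argument used for Theorem \ref{Theorem: Correction step}, since the prediction step carries the identical set-membership structure, with the correction ellipsoid $\mathcal{E}(\hat{\bm{x}}_{k|k}, \bm{P}_{k|k})$ and the input disturbance bound $\bm{w}_k \in \mathcal{E}(\bm{0}_{\bar{w}}, \bm{Q}_k)$ now playing the parts that the prediction ellipsoid and the measurement disturbance played before. First I would form the prediction error by subtracting Eq. \eqref{objective-predicted state estimate} from the state update in Eq. \eqref{discrete-time linear dynamics}, which gives
\begin{equation*}
\bm{x}_{k+1} - \hat{\bm{x}}_{k+1|k} = \bm{A}_k (\bm{x}_k - \hat{\bm{x}}_{k|k}) + \bm{G}_k \bm{w}_k .
\end{equation*}
Substituting the representation $\bm{x}_k - \hat{\bm{x}}_{k|k} = \bm{E}_{k|k} \bm{z}_{k|k}$ from Eq. \eqref{True state-correction ellipsoid} then expresses this error as $\bm{A}_k \bm{E}_{k|k} \bm{z}_{k|k} + \bm{G}_k \bm{w}_k$, a linear form in the two unknowns $\bm{z}_{k|k}$ and $\bm{w}_k$.

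Next I would collect these unknowns into $\bm{\zeta} = \text{col}[1, \bm{z}_{k|k}, \bm{w}_k]$, so that the prediction error becomes $\bm{\Pi}_{k|k} \bm{\zeta}$ with $\bm{\Pi}_{k|k} = [\bm{0}_{\bar{n}} \ \ \bm{A}_k \bm{E}_{k|k} \ \ \bm{G}_k]$, exactly as stated. The containment $\bm{x}_{k+1} \in \mathcal{E}(\hat{\bm{x}}_{k+1|k}, \bm{P}_{k+1|k})$ is then the single quadratic inequality $\bm{\zeta}^{\text{T}} [\bm{\Pi}_{k|k}^{\text{T}} \bm{P}_{k+1|k}^{-1} \bm{\Pi}_{k|k} - \text{diag}(1, \bm{O}_{\bar{n}}, \bm{O}_{\bar{w}})] \bm{\zeta} \leq 0$, while the two a priori bounds $|\bm{z}_{k|k}| \leq 1$ and $\bm{w}_k^{\text{T}} \bm{Q}_k^{-1} \bm{w}_k \leq 1$ become $\bm{\zeta}^{\text{T}} \text{diag}(-1, \bm{I}_{\bar{n}}, \bm{O}_{\bar{w}}) \bm{\zeta} \leq 0$ and $\bm{\zeta}^{\text{T}} \text{diag}(-1, \bm{O}_{\bar{n}}, \bm{Q}_k^{-1}) \bm{\zeta} \leq 0$. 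Applying the S-procedure with multipliers $\tau_3, \tau_4 \geq 0$ yields the sufficient condition $\bm{\Pi}_{k|k}^{\text{T}} \bm{P}_{k+1|k}^{-1} \bm{\Pi}_{k|k} - \bm{\Psi}(\tau_3, \tau_4) \leq 0$, where $\bm{\Psi}(\tau_3, \tau_4) = \text{diag}(1 - \tau_3 - \tau_4, \tau_3 \bm{I}_{\bar{n}}, \tau_4 \bm{Q}_k^{-1})$ is assembled precisely from the three diagonal terms.

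Finally I would apply the Schur complement to the quadratic term $\bm{\Pi}_{k|k}^{\text{T}} \bm{P}_{k+1|k}^{-1} \bm{\Pi}_{k|k}$, using $\bm{P}_{k+1|k} > 0$, to recover the $2 \times 2$ block matrix inequality appearing in Eq. \eqref{The complete problem statement-2}, and then minimize $\text{trace}(\bm{P}_{k+1|k})$ over $\bm{P}_{k+1|k} > 0$ and $\tau_3, \tau_4 \geq 0$ to single out the optimal prediction ellipsoid. I expect no genuine obstacle here: unlike the correction step, $\bm{\Pi}_{k|k}$ contains no free gain (there is no analogue of $\bm{L}_k$), so after the Schur complement the block LMI is \emph{affine} in all decision variables and the SDP is convex by inspection. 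The only points requiring care are the bookkeeping substitution of Eq. \eqref{True state-correction ellipsoid} into the error and the recognition that the S-procedure furnishes a sufficient (hence containment-guaranteeing but possibly conservative) condition, so that minimizing the trace produces the smallest ellipsoid \emph{within this relaxation} rather than a provably tight outer bound.
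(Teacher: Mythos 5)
Your proposal is correct and takes essentially the same route as the paper: the paper's own proof of this theorem is literally ``follows directly from the proof of Theorem \ref{Theorem: Correction step}'', and your write-up is exactly that adaptation---prediction error $\bm{A}_k \bm{E}_{k|k}\bm{z}_{k|k} + \bm{G}_k\bm{w}_k$, the lifted vector $\bm{\zeta}$, S-procedure with $\tau_3,\tau_4$, Schur complement, and trace minimization. Your side remark that the absence of a gain matrix $\bm{L}_k$ makes the block inequality affine in all decision variables (so the SDP is convex by inspection) is accurate and is precisely why the paper felt safe omitting the details.
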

\begin{proof}
Follows directly from the proof of Theorem \ref{Theorem: Correction step} and has been omitted.  \qed
\end{proof}
Interior point methods can be implemented to efficiently solve the SDPs in Eqs. \eqref{The complete problem statement-1} and \eqref{The complete problem statement-2} \cite{Vandenberghe_Boyd_1996}. The recursive SMF algorithm is summarized in Algorithm \ref{SMF algorithm}.
\begin{algorithm}
\caption{The SMF Algorithm} 
\label{SMF algorithm}
\begin{algorithmic}[1]
\State (Initialization) Select a time-horizon $T_f$. Given the initial values $(\hat{\bm{x}}_0, \bm{P}_0)$, set $k = 0$, $\hat{\bm{x}}_{k|k-1} = \hat{\bm{x}}_0$, $\bm{E}_{k|k-1} = \bm{E}_0$ where $ \bm{P}_0 = \bm{E}_0 \bm{E}_0^{\text{T}}$.
\State Find $\bm{P}_{k|k}$ and $\bm{L}_k$ by solving the SDP in Eq. \eqref{The complete problem statement-1}. 
\State Calculate $\hat{\bm{x}}_{k|k}$ using Eq. \eqref{objective-corrected state estimate}. Also, calculate $\bm{E}_{k|k}$ using $\bm{P}_{k|k} = \bm{E}_{k|k} \bm{E}_{k|k}^\text{T}$.
\State Solve the SDP in Eq. \eqref{The complete problem statement-2} to obtain $\bm{P}_{k+1|k}$.
\State Calculate $\hat{\bm{x}}_{k+1|k}$ using Eq. \eqref{objective-predicted state estimate}. Compute $\bm{E}_{k+1|k}$ using $\bm{P}_{k+1|k} = \bm{E}_{k+1|k} \bm{E}_{k+1|k}^\text{T}$. 
\State If $k = T_f$ exit. Otherwise, set $k=k+1$ and go to Step 2. 
\end{algorithmic}
\end{algorithm}
\vspace{-0.5cm}
\section{Leader-Follower Synchronization of Multi-Agent Systems} \label{SMF based synchronization}
This section describes local control input synthesis for the leader-follower synchronization. Results presented in this section are based on the results given in Ref. \cite{Hengster-Movric_et_al_2013_Automatica} and, to be consistent, we have adopted some of the terminologies and notations used in Ref. \cite{Hengster-Movric_et_al_2013_Automatica}. 
\vspace{-0.5cm}
\subsection{Graph related preliminaries \cite{Ren_Beard_2008}}
Consider a multi-agent system consisting of $N$ agents. The communication topology of the multi-agent system can be represented by a graph $\mathscr{G} = (\mathscr{V},\mathscr{E})$ where $\mathscr{V} = \{ 1,2,\dots, N \}$ is a nonempty node set and $\mathscr{E} \subseteq \mathscr{V} \times \mathscr{V}$ is an edge set of ordered pairs of nodes, called edges. Node $i$ in the graph represents agent $i$. We consider simple, directed graphs in this paper. The edge $(i,j)$ in the edge set of a directed graph denotes that node $j$ can obtain information from node $i$, but not necessarily \textit{vice versa}. If an edge $(i,j) \in \mathscr{E}$, then node $i$ is a neighbor of node $j$. The set of neighbors of node $i$ is denoted as $\mathscr{N}_i$.

The adjacency matrix $\pmb{\mathscr{A}} = [ a_{ij} ] \in \mathbb{R}^{N\times N}$ of a directed graph $(\mathscr{V}, \mathscr{E})$ is defined such that $a_{ij}$ is a positive weight if $(j,i) \in \mathscr{E}$, and $a_{ij} = 0$ otherwise. The graph Laplacian matrix $\pmb{\mathscr{L}}$ is defined as $\pmb{\mathscr{L}} = \pmb{\mathscr{D}} - \pmb{\mathscr{A}}$ where $\pmb{\mathscr{D}} = [d_{ij}] \in \mathbb{R}^{N\times N}$ is the in-degree matrix with $d_{ij} = 0, i \neq j$, and $d_{ii} = \sum_{j=1}^{N} a_{ij}, i = 1,2,\dots,N $. A directed path is a sequence of edges in a directed graph of the form ($i_1$, $i_2$), ($i_2$, $i_3$), $\dots$. The graph $\mathscr{G}$ contains a (directed) spanning tree if there exists a node, called the root node, such that every other node in $\mathscr{V}$ can be connected by a directed path starting from that node.  
\vspace{-0.5cm}
\subsection{Synchronization: Formulation and Results}
We consider $N$ agents connected via a directed graph and a leader. Agent $i$ ($i=1,2, \dots, N$) is a dynamical system of the form 
\begin{equation} \label{i-th agent dynamics}
\begin{split}
\bm{x}^{(i)}_{k+1} &= \bm{A} \bm{x}^{(i)}_{k} + \bm{B} \bm{u}_k^{(i)} + \bm{G} \bm{w}_k^{(i)}, \\ 
\bm{y}^{(i)}_{k}  &= \bm{C} \bm{x}^{(i)}_k + \bm{D} \bm{v}^{(i)}_k, \quad k \in \mathbb{Z}_{\star}
\end{split}
\end{equation}  
where $\bm{x}^{(i)}_{k} \in \mathbb{R}^n$, $\bm{u}_k^{(i)} \in \mathbb{R}^m$, $\bm{y}^{(i)}_{k} \in \mathbb{R}^p$, $\bm{w}_k^{(i)} \in \mathbb{R}^w$, $\bm{v}_k^{(i)} \in \mathbb{R}^v$ are the state, control input, measured output, input and output disturbances for agent $i$, respectively. Clearly, the system described by Eq. \eqref{i-th agent dynamics} is in the form of the system described by Eq. \eqref{discrete-time linear dynamics}, with the time-varying matrices replaced by the constant matrices. Next, we modify Assumption \ref{Assumption 1} and impose the following assumptions on the dynamics of agent $i$ ($i=1,2, \dots, N$).
\begin{assumption} \label{Assumption 2} 
\begin{enumerate}[label= 2.\arabic*]
\item The initial state $\bm{x}^{(i)}_0$ is unknown. However, it satisfies $\bm{x}^{(i)}_0 \in \mathcal{E} (\hat{\bm{x}}^{(i)}_0, \bm{P}^{(i)}_0)$ where $\hat{\bm{x}}^{(i)}_0$ is a given initial estimate and $\bm{P}^{(i)}_0$ is known. Also, $|\bm{P}^{(i)}_0| \leq p_0$ holds with some $p_0 > 0$.
\item \label{Assumption 2 - process and measurement noise ellipsoids} $\bm{w}^{(i)}_k$ and $\bm{v}^{(i)}_k$ are unknown-but-bounded for all $k \in \mathbb{Z}_{\star}$. Also, $\bm{w}^{(i)}_k \in \mathcal{E}(\bm{0}_{w}, \bm{Q}^{(i)}_k)$ and $\bm{v}^{(i)}_k \in \mathcal{E}(\bm{0}_{v}, \bm{R}^{(i)}_k)$ for all $k \in \mathbb{Z}_{\star}$ where $\bm{Q}^{(i)}_k$, $\bm{R}^{(i)}_k$ are known with $|\bm{Q}^{(i)}_k| \leq \bar{q}$ and $|\bm{R}^{(i)}_k| \leq \bar{r}$ for all $k \in \mathbb{Z}_{\star}$ with some $\bar{q}, \bar{r} > 0$. 
\end{enumerate}
\end{assumption}
Under this assumption, agent $i$ ($i=1,2, \dots, N$) employs the SMF in Algorithm \ref{SMF algorithm} to estimate its own state. Now, we introduce the following assumption on the system matrices of the agents.
\begin{assumption} \label{Assumptions on A and B matrices}
$\bm{B}$ is full column rank with the pair $(\bm{A},\bm{B})$ stabilizable.
\end{assumption}
We consider the leader to be a system of the form
\begin{equation} 
\begin{split}
\bm{x}^{(0)}_{k+1} = \bm{A} \bm{x}^{(0)}_{k}, \
\bm{y}^{(0)}_k = \bm{x}^{(0)}_{k}, \quad k \in \mathbb{Z}_{\star}
\end{split}
\end{equation}
where $\bm{x}^{(0)}_{k} \in \mathbb{R}^n$ is the leader's state and $\bm{y}^{(0)}_k$ is the output. Note that the leader is a virtual system that generates the reference trajectory for the agents $i=1,2, \dots, N$ to track.
We define the local neighborhood tracking errors, using the corrected state estimates of the agents, as 
\begin{equation*}
\bm{\epsilon}^{(i)}_k = \sum_{j \in \mathscr{N}_i} a_{ij} (\hat{\bm{x}}^{(j)}_{k|k} - \hat{\bm{x}}^{(i)}_{k|k}) + g_i (\bm{x}^{(0)}_{k} - \hat{\bm{x}}^{(i)}_{k|k}) 
\end{equation*}
where $g_i \geq 0$ are the pinning gains, $\hat{\bm{x}}^{(i)}_{k|k}$ and $\hat{\bm{x}}^{(j)}_{k|k}$ are the corrected state estimates of agent $i$ and $j$, respectively. If agent $i$ is pinned to the leader, we take $g_i > 0$. Now, we choose the control input of agent $i$ as \cite{Hengster-Movric_et_al_2013_Automatica}
\begin{equation*}
\bm{u}^{(i)}_k = c (1+d_{ii} + g_i)^{-1} \bm{K} \bm{\epsilon}^{(i)}_k 
\end{equation*}
where $c > 0$ is a coupling gain and $\bm{K}$ is a control gain matrix to be discussed subsequently. Hence, the global dynamics of the $N$ agents can be expressed as
\begin{equation} \label{global dynamics of the agents}
\begin{split}
\bm{x}^{(g)}_{k+1} = (\bm{I}_N \otimes \bm{A}) \bm{x}^{(g)}_k + \bm{u}^{(g)}_k + (\bm{I}_N \otimes \bm{G}) \bm{w}^{(g)}_k, \quad k \in \mathbb{Z}_{\star} 
\end{split}
\end{equation} 
with $\bm{x}^{(g)}_k = \text{col} [\bm{x}^{(1)}_k, \dots, \bm{x}^{(N)}_k]$, $\bm{w}^{(g)}_k = \text{col} [\bm{w}^{(1)}_k, \dots, \bm{w}^{(N)}_k]$, and 
\begin{equation}  \label{consenus control synthesis}
\begin{split}
\bm{u}^{(g)}_k  =& -c(\bm{I}_N + \pmb{\mathscr{D}} + \pmb{\mathcal{G}})^{-1} (\pmb{\mathscr{L}} + \pmb{\mathcal{G}}) \otimes \bm{B} \bm{K} \hat{\bm{x}}^{(g)}_{k|k}  \\
			  & + c(\bm{I}_N + \pmb{\mathscr{D}} + \pmb{\mathcal{G}})^{-1} (\pmb{\mathscr{L}} + \pmb{\mathcal{G}}) \otimes \bm{B} \bm{K} \bar{\bm{x}}^{(0)}_k
\end{split}
\end{equation}
where $\pmb{\mathcal{G}} = \text{diag} (g_1, g_2, ...., g_N)$ is the matrix of pinning gains, $\hat{\bm{x}}^{(g)}_{k|k} = \text{col} [\hat{\bm{x}}^{(1)}_{k|k}, \dots, \hat{\bm{x}}^{(N)}_{k|k}]$, and $\bar{\bm{x}}^{(0)}_k = \left( \bm{1}_N \otimes \bm{x}^{(0)}_k \right)$. Note that the superscript $(g)$ is utilized to denote the global variables. Now, using Eq. \eqref{True state-correction ellipsoid} for each agent's corrected state estimates, we can express $\bm{x}^{(g)}_k$ as
\begin{equation} \label{global true state and corrected state estimates}
\bm{x}^{(g)}_k = \hat{\bm{x}}^{(g)}_{k|k} + \bm{E}^{(g)}_{k|k} \bm{z}^{(g)}_{k|k}
\end{equation}
where $\bm{E}^{(g)}_{k|k} = \text{diag} (\bm{E}^{(1)}_{k|k}, \dots, \bm{E}^{(N)}_{k|k})$, $\bm{z}^{(g)}_{k|k} = \text{col} [\bm{z}^{(1)}_{k|k}, \dots, \bm{z}^{(N)}_{k|k}]$. Note that $\bm{E}^{(i)}_{k|k} \left(\bm{E}^{(i)}_{k|k} \right)^\text{T} = \bm{P}^{(i)}_{k|k}$ where $\bm{P}^{(i)}_{k|k}$ is the correction ellipsoid shape matrix for agent $i$ and $|\bm{z}^{(i)}_{k|k}| \leq 1$ for $i = 1,\dots, N$. Our next assumption is regarding the 
interaction graph.
\begin{assumption}[\cite{Hengster-Movric_et_al_2013_Automatica}] \label{Assumption on the graph structure}
The interaction graph contains a spanning tree with at least one nonzero pinning gain that connects the leader and the root node.
\end{assumption}
The global disagreement error \cite{Hengster-Movric_et_al_2013_Automatica} is defined as $\bm{\delta}^{(g)}_k = \bm{x}^{(g)}_k - \bar{\bm{x}}^{(0)}_k$. Utilizing Eqs. \eqref{global dynamics of the agents}-\eqref{global true state and corrected state estimates}, we express the global error system as 
\begin{equation} \label{global disagreement error dynamics}
\bm{\delta}^{(g)}_{k+1} = \bm{A}_c \bm{\delta}^{(g)}_k + \bm{B}_c \bm{E}^{(g)}_{k|k} \bm{z}^{(g)}_{k|k} + (\bm{I}_N \otimes \bm{G}) \bm{w}^{(g)}_k, \quad k \in \mathbb{Z}_{\star}
\end{equation} 
where
\begin{equation}
\begin{split}
\bm{A}_c = \left[ (\bm{I}_N \otimes \bm{A}) - c \bm{\Gamma} \otimes \bm{B} \bm{K} \right], \
\bm{B}_c = c \bm{\Gamma} \otimes \bm{B} \bm{K}
\end{split}
\end{equation}
with $\bm{\Gamma} = (\bm{I}_N + \pmb{\mathscr{D}} + \pmb{\mathcal{G}})^{-1} (\pmb{\mathscr{L}} + \pmb{\mathcal{G}})$. Now, we recall the following technical result from Ref. \cite{Hengster-Movric_et_al_2013_Automatica}.
\begin{lemma}[\cite{Hengster-Movric_et_al_2013_Automatica}] \label{Lemma for Schur A_c}
$\rho(\bm{A}_c) < 1$ iff $\rho (\bm{A} - c \Lambda_i \bm{B} \bm{K}) < 1$ for all the eigenvalues $\Lambda_i, i = 1,2, \dots,N$ of $\bm{\Gamma}$.
\end{lemma}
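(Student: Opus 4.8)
The plan is to exploit the Kronecker structure of $\bm{A}_c$ by bringing $\bm{\Gamma}$ to upper-triangular form and reading off the spectrum from the resulting block-triangular matrix. First I would invoke the Schur decomposition: since $\bm{\Gamma}$ is a real square matrix, there exists a unitary matrix $\bm{U}$ such that $\bm{U}^{*} \bm{\Gamma} \bm{U} = \bm{T}$, where $\bm{T}$ is upper triangular with the eigenvalues $\Lambda_1, \dots, \Lambda_N$ of $\bm{\Gamma}$ on its diagonal. Using the Schur (triangular) form rather than a diagonalization is the key modeling choice here, because $\bm{\Gamma}$ need not be diagonalizable in general.

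Next I would apply the similarity transformation $\bm{U} \otimes \bm{I}_n$ to $\bm{A}_c$ and use the mixed-product property of the Kronecker product, namely $(\bm{M}_1 \otimes \bm{M}_2)(\bm{M}_3 \otimes \bm{M}_4) = (\bm{M}_1 \bm{M}_3) \otimes (\bm{M}_2 \bm{M}_4)$, together with $\bm{U}^{*} \bm{U} = \bm{I}_N$. A short computation then gives
\begin{equation*}
(\bm{U}^{*} \otimes \bm{I}_n) \, \bm{A}_c \, (\bm{U} \otimes \bm{I}_n) = (\bm{I}_N \otimes \bm{A}) - c \, (\bm{T} \otimes \bm{B} \bm{K}).
\end{equation*}
Because $\bm{T}$ is upper triangular with diagonal entries $\Lambda_i$, the matrix $\bm{T} \otimes \bm{B} \bm{K}$ is block upper triangular with diagonal blocks $\Lambda_i \bm{B} \bm{K}$, so the right-hand side is block upper triangular with diagonal blocks $\bm{A} - c \Lambda_i \bm{B} \bm{K}$, $i = 1, \dots, N$.

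I would then close the argument with two standard facts: a similarity transformation preserves the spectrum, and the spectrum of a block upper-triangular matrix is the union of the spectra of its diagonal blocks. Combining these yields
\begin{equation*}
\rho(\bm{A}_c) = \max_{i = 1, \dots, N} \rho(\bm{A} - c \Lambda_i \bm{B} \bm{K}),
\end{equation*}
from which the stated equivalence $\rho(\bm{A}_c) < 1 \iff \rho(\bm{A} - c \Lambda_i \bm{B} \bm{K}) < 1$ for all $i$ follows immediately.

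The argument is essentially routine once the Schur decomposition is set up, so I do not anticipate a genuine obstacle. The only point requiring care is to resist the temptation to diagonalize $\bm{\Gamma}$; since $(\pmb{\mathscr{L}} + \pmb{\mathcal{G}})$ and hence $\bm{\Gamma}$ may fail to be diagonalizable, I would lean on the upper-triangular Schur form throughout, which still delivers the block-triangular structure needed to expose the spectrum as the union of the diagonal blocks.
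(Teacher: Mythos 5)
Your proof is correct. Note that the paper itself gives no proof of this lemma --- it is stated as a quotation from Ref.~\cite{Hengster-Movric_et_al_2013_Automatica} --- and your argument is precisely the standard one used in that reference and throughout the consensus literature: unitary (Schur) triangularization of $\bm{\Gamma}$, the Kronecker mixed-product rule to exhibit $\bm{A}_c$ as similar to a block upper-triangular matrix with diagonal blocks $\bm{A} - c \Lambda_i \bm{B} \bm{K}$, and the resulting identity $\rho(\bm{A}_c) = \max_{i} \rho(\bm{A} - c \Lambda_i \bm{B} \bm{K})$. Your insistence on the Schur form rather than diagonalization is well placed, since $\bm{\Gamma} = (\bm{I}_N + \pmb{\mathscr{D}} + \pmb{\mathcal{G}})^{-1} (\pmb{\mathscr{L}} + \pmb{\mathcal{G}})$ need not be diagonalizable for a general directed graph, and the complex Schur form is also the right tool because the eigenvalues $\Lambda_i$ may be complex (as in the paper's Example-2).
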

If the matrix $\bm{A}$ is unstable or marginally stable, then Lemma \ref{Lemma for Schur A_c} requires Assumption \ref{Assumption on the graph structure} with the pair $(\bm{A}, \bm{B})$ stabilizable \cite{Hengster-Movric_et_al_2013_Automatica}. Using Theorem 2 in Ref. \cite{Hengster-Movric_et_al_2013_Automatica}, $c$ and $\bm{K}$ are chosen such that $\rho(\bm{A}_c) < 1$. To this end, we state the following result.
\begin{lemma}[\cite{Hengster-Movric_et_al_2013_Automatica}] \label{Lemma: Choices of c and K for A_c Schur}
Let Assumption \ref{Assumption on the graph structure} hold and let $\pmb{\mathcal{P}}$ be a positive definite solution to the discrete-time Riccati-like equation 
\begin{equation} \label{Riccati-like equation}
\bm{A}^\text{T} \pmb{\mathcal{P}} \bm{A} - \pmb{\mathcal{P}} + \pmb{\mathcal{Q}} - \bm{A}^\text{T} \pmb{\mathcal{P}} \bm{B} (\bm{B}^\text{T} \pmb{\mathcal{P}} \bm{B})^{-1} \bm{B}^\text{T} \pmb{\mathcal{P}} \bm{A} = \bm{O}_n
\end{equation}
for some $\pmb{\mathcal{Q}} > 0$. Define 
\begin{equation*}
r = [\sigma_{\text{max}} (\pmb{\mathcal{Q}}^{-0.5} \bm{A}^\text{T} \pmb{\mathcal{P}} \bm{B} (\bm{B}^\text{T} \pmb{\mathcal{P}} \bm{B})^{-1} \bm{B}^\text{T} \pmb{\mathcal{P}} \bm{A} \pmb{\mathcal{Q}}^{-0.5})]^{-0.5}
\end{equation*}
Further, let there exist a $C(c_0,r_0)$ containing all the eigenvalues $\Lambda_i, i = 1,2, \dots,N$ of $\bm{\Gamma}$ such that $(r_0/c_0) < r$. Then, $\rho(\bm{A}_c) < 1$ for $\bm{K} = (\bm{B}^\text{T} \pmb{\mathcal{P}} \bm{B})^{-1} \bm{B}^\text{T} \pmb{\mathcal{P}} \bm{A}$ and $c=(1/c_0)$.
\end{lemma}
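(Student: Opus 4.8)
The plan is to invoke Lemma \ref{Lemma for Schur A_c} to reduce the global requirement $\rho(\bm{A}_c) < 1$ to the family of scalar-parametrized conditions $\rho(\bm{A} - c\Lambda_i \bm{B}\bm{K}) < 1$, one for each eigenvalue $\Lambda_i$ of $\bm{\Gamma}$, and then to certify every member of this family with the \emph{single common} Lyapunov matrix $\pmb{\mathcal{P}}$ furnished by the Riccati-like equation. Setting $c = 1/c_0$ and writing $q_i = c\Lambda_i = \Lambda_i/c_0$, the local closed-loop matrix is $\bm{A}_{c,i} = \bm{A} - q_i \bm{B}\bm{K}$. Since $\bm{\Gamma}$ need not be symmetric, the $\Lambda_i$ and hence the $q_i$ are generally complex, so I would work with the Hermitian form $\bm{A}_{c,i}^{*} \pmb{\mathcal{P}} \bm{A}_{c,i} - \pmb{\mathcal{P}}$ and the complex discrete-time Lyapunov criterion: if this form is negative definite for the Hermitian matrix $\pmb{\mathcal{P}} > 0$, then the eigenvector test $\bm{A}_{c,i}\bm{v} = \lambda \bm{v}$ forces $|\lambda|^2 \bm{v}^{*}\pmb{\mathcal{P}}\bm{v} < \bm{v}^{*}\pmb{\mathcal{P}}\bm{v}$, i.e. $\rho(\bm{A}_{c,i}) < 1$.

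The central computation is to expand $\bm{A}_{c,i}^{*} \pmb{\mathcal{P}} \bm{A}_{c,i}$ and collapse it using the structure of $\bm{K}$. With $\bm{K} = (\bm{B}^\text{T}\pmb{\mathcal{P}}\bm{B})^{-1}\bm{B}^\text{T}\pmb{\mathcal{P}}\bm{A}$ one has the identities $\bm{A}^\text{T}\pmb{\mathcal{P}}\bm{B}\bm{K} = \bm{K}^\text{T}(\bm{B}^\text{T}\pmb{\mathcal{P}}\bm{B})\bm{K} = \bm{M}$, where $\bm{M} = \bm{A}^\text{T}\pmb{\mathcal{P}}\bm{B}(\bm{B}^\text{T}\pmb{\mathcal{P}}\bm{B})^{-1}\bm{B}^\text{T}\pmb{\mathcal{P}}\bm{A} \geq 0$ is precisely the matrix appearing inside the definition of $r$. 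Because $\bm{A}, \bm{B}, \bm{K}, \pmb{\mathcal{P}}$ are real, the two cross terms combine into $-(q_i + \bar{q}_i)\bm{M}$ and the quadratic term into $|q_i|^2 \bm{M}$, giving $\bm{A}_{c,i}^{*}\pmb{\mathcal{P}}\bm{A}_{c,i} = \bm{A}^\text{T}\pmb{\mathcal{P}}\bm{A} - (2\,\mathrm{Re}\,q_i - |q_i|^2)\bm{M}$. The key algebraic fact is the identity $2\,\mathrm{Re}\,q_i - |q_i|^2 = 1 - |q_i - 1|^2$, which recasts this as $\bm{A}^\text{T}\pmb{\mathcal{P}}\bm{A} - (1 - |q_i-1|^2)\bm{M}$. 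Substituting $\bm{A}^\text{T}\pmb{\mathcal{P}}\bm{A} = \pmb{\mathcal{P}} - \pmb{\mathcal{Q}} + \bm{M}$ from the Riccati-like equation then yields the clean expression $\bm{A}_{c,i}^{*}\pmb{\mathcal{P}}\bm{A}_{c,i} - \pmb{\mathcal{P}} = -\pmb{\mathcal{Q}} + |q_i - 1|^2 \bm{M}$.

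It then remains to show the right-hand side is negative definite. Congruence by $\pmb{\mathcal{Q}}^{-0.5}$ reduces $-\pmb{\mathcal{Q}} + |q_i-1|^2\bm{M} < 0$ to $|q_i-1|^2\,\pmb{\mathcal{Q}}^{-0.5}\bm{M}\pmb{\mathcal{Q}}^{-0.5} < \bm{I}_n$, that is, to $|q_i - 1|^2 \sigma_{\text{max}}(\pmb{\mathcal{Q}}^{-0.5}\bm{M}\pmb{\mathcal{Q}}^{-0.5}) < 1$, which by the definition of $r$ is exactly $|q_i - 1| < r$. Finally I would verify this bound from the disk hypothesis: since every $\Lambda_i$ lies in $C(c_0, r_0)$ we have $|\Lambda_i - c_0| < r_0$, and with $q_i = \Lambda_i/c_0$ this gives $|q_i - 1| = |\Lambda_i - c_0|/c_0 < r_0/c_0 < r$. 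Thus each $\bm{A}_{c,i}$ is Schur, and Lemma \ref{Lemma for Schur A_c} delivers $\rho(\bm{A}_c) < 1$.

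I expect the main obstacle to be the careful bookkeeping for \emph{complex} $\Lambda_i$: one must justify the Hermitian Lyapunov form in place of the symmetric one and then exploit the identity $2\,\mathrm{Re}\,q_i - |q_i|^2 = 1 - |q_i-1|^2$, which is what converts the abstract Lyapunov decrease condition into the concrete geometric disk condition $r_0/c_0 < r$. The remaining work — identifying $\bm{M}$ with the matrix inside $r$, the Riccati substitution, and the singular-value/congruence step — is routine once that identity is in hand.
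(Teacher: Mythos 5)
Your proof is correct. The paper itself does not prove this lemma---it is imported directly from Ref.~\cite{Hengster-Movric_et_al_2013_Automatica} (Theorem 2 there)---and your argument is essentially that reference's own Riccati-based proof: reduction to the local matrices $\bm{A} - c\Lambda_i\bm{B}\bm{K}$ via Lemma~\ref{Lemma for Schur A_c}, certification of each with the single Lyapunov matrix $\pmb{\mathcal{P}}$ from Eq.~\eqref{Riccati-like equation}, the identity $2\,\mathrm{Re}(q_i) - |q_i|^2 = 1 - |q_i - 1|^2$ turning the Lyapunov decrease condition into the stability-disk condition $|c\Lambda_i - 1| < r$, and the covering-circle hypothesis (with the implicit $c_0 > 0$, forced by $c = 1/c_0$ being a positive coupling gain) placing every eigenvalue of $\bm{\Gamma}$ inside that disk.
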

If $\bm{B}$ is full column rank, Eq. \eqref{Riccati-like equation} has a positive definite solution $\pmb{\mathcal{P}}$ only if the pair $(\bm{A}, \bm{B})$ is stabilizable \cite{Hengster-Movric_et_al_2013_Automatica}. In this regard, Assumption \ref{Assumptions on A and B matrices} is pertinent. Next, we introduce the notion of input-to-state stability in the following definition. 
\begin{definition} \label{Definition: ISS}
A discrete-time system of the form $\bm{x}_{k+1} = \bm{\phi} (\bm{x}_k, \bm{u}_{1_{k}}, \bm{u}_{2_{k}}), \ k \in \mathbb{Z}_{\star}$ with $\bm{u}_{1} : \mathbb{Z}_{\star} \rightarrow \mathbb{R}^{m_1}$,  $\bm{u}_{2} : \mathbb{Z}_{\star} \rightarrow \mathbb{R}^{m_2}$, $\bm{\phi} (\bm{0}_n, \bm{0}_{m_{1}}, \bm{0}_{m_{2}}) = \bm{0}_n$ is (globally) input-to-state stable (ISS) if there exist a class $\mathcal{KL}$ function $\beta$ and two class $\mathcal{K}$ functions $\gamma_1,\gamma_2$ such that, for each pair of inputs $\bm{u}_{1} \in l^{m_1}_{\infty}$, $\bm{u}_{2} \in l^{m_2}_{\infty}$ and each $\bm{x}_0 \in \mathbb{R}^n$, it holds that
\begin{equation} \label{ISS definition}
|\bm{x}_k| \leq \beta(|\bm{x}_0|,k) + \gamma_1 (||\bm{u}_{1}||) + \gamma_2 (||\bm{u}_{2}||)
\end{equation}
for each $k \in \mathbb{Z}_\star$.
\end{definition}
\begin{remark}
Definition \ref{Definition: ISS} is adopted from Definition 3.1 in Ref. \cite{Jiang_Wang_2001_Automatica} and has been suitably modified for systems with two inputs using Definition IV.3 in Ref. \cite{Lazar_et_al_2013_TAC}.
\end{remark}
We state the main result of this section in the next theorem.
\begin{theorem}  \label{Theorem: the global disagreement error dynamics is ISS}
Suppose the following conditions are satisfied: (i) Under Assumption \ref{Assumption 2}, agent $i$ ($i=1,2,\dots,N$) employs the SMF in Algorithm \ref{SMF algorithm} to estimate its own state; (ii) Assumptions \ref{Assumptions on A and B matrices} and \ref{Assumption on the graph structure} hold; (iii) $c$ and $\bm{K}$ are chosen using Lemma \ref{Lemma: Choices of c and K for A_c Schur}. Then, the global error system in Eq. \eqref{global disagreement error dynamics} is ISS.
\end{theorem}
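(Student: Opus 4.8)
The plan is to read Eq. \eqref{global disagreement error dynamics} as a linear time-invariant recursion $\bm{\delta}^{(g)}_{k+1} = \bm{A}_c \bm{\delta}^{(g)}_k + \bm{B}_c \bm{d}_{1,k} + (\bm{I}_N \otimes \bm{G}) \bm{d}_{2,k}$ driven by the two exogenous signals $\bm{d}_{1,k} = \bm{E}^{(g)}_{k|k} \bm{z}^{(g)}_{k|k}$ (the estimation-error channel) and $\bm{d}_{2,k} = \bm{w}^{(g)}_k$ (the disturbance channel), and then to produce the bound in Eq. \eqref{ISS definition} by the discrete variation-of-constants estimate for a system whose state matrix is Schur. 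The zero-input equilibrium property required by Definition \ref{Definition: ISS} is immediate, since both channels vanish when $\bm{z}^{(g)}_{k|k} = \bm{0}$ and $\bm{w}^{(g)}_k = \bm{0}$, after which $\bm{\delta}^{(g)}_{k+1} = \bm{A}_c \bm{\delta}^{(g)}_k$ fixes $\bm{0}$.

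First I would invoke condition (iii): by Lemma \ref{Lemma: Choices of c and K for A_c Schur} (through Lemma \ref{Lemma for Schur A_c} and Assumption \ref{Assumption on the graph structure}) the gains $c$ and $\bm{K}$ render $\rho(\bm{A}_c) < 1$, so $\bm{A}_c$ is Schur. Consequently there exist constants $M \geq 1$ and $\lambda \in (0,1)$ with $|\bm{A}_c^{k}| \leq M \lambda^{k}$ for every $k \in \mathbb{Z}_{\star}$; this geometric decay is what will absorb the initial condition into a class $\mathcal{KL}$ term.

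Next I would show that both input channels are uniformly bounded in $k$. The disturbance channel is immediate from Assumption \ref{Assumption 2 - process and measurement noise ellipsoids}: $\bm{w}^{(i)}_k \in \mathcal{E}(\bm{0}_w, \bm{Q}^{(i)}_k)$ with $|\bm{Q}^{(i)}_k| \leq \bar{q}$ gives $|\bm{w}^{(i)}_k| \leq \sqrt{\bar{q}}$, hence $||\bm{d}_2||$ is finite. For the estimation-error channel, $|\bm{z}^{(i)}_{k|k}| \leq 1$ yields $|\bm{z}^{(g)}_{k|k}| \leq \sqrt{N}$, so it remains to bound $|\bm{E}^{(g)}_{k|k}|$, equivalently the correction-ellipsoid shape matrices $\bm{P}^{(i)}_{k|k}$, uniformly in $k$. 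I expect this to be the main obstacle: each $\bm{P}^{(i)}_{k|k}$ is defined only implicitly, as the minimizer of the SDP in Theorem \ref{Theorem: Correction step} fed by the predicted shape matrix from Theorem \ref{Theorem: Prediction step}, so there is no closed-form recursion to iterate. The argument I would give uses precisely the uniform bounds $p_0$, $\bar{q}$, $\bar{r}$ placed on $\bm{P}^{(i)}_0$, $\bm{Q}^{(i)}_k$, $\bm{R}^{(i)}_k$ in Assumption \ref{Assumption 2} to show that the trace-minimizing shape matrices cannot grow without bound, producing a constant $\bar{p} > 0$ with $|\bm{P}^{(i)}_{k|k}| \leq \bar{p}$, hence $|\bm{E}^{(g)}_{k|k}| \leq \sqrt{\bar{p}}$, so that $||\bm{d}_1||$ is finite as well.

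Finally, with these bounds in hand I would unroll Eq. \eqref{global disagreement error dynamics}, writing $\bm{\delta}^{(g)}_k = \bm{A}_c^{k} \bm{\delta}^{(g)}_0 + \sum_{j=0}^{k-1} \bm{A}_c^{k-1-j} \big( \bm{B}_c \bm{d}_{1,j} + (\bm{I}_N \otimes \bm{G}) \bm{d}_{2,j} \big)$, taking Euclidean norms, applying $|\bm{A}_c^{k-1-j}| \leq M \lambda^{k-1-j}$, and summing the geometric series $\sum_{j=0}^{k-1} \lambda^{k-1-j} \leq 1/(1-\lambda)$. This produces $|\bm{\delta}^{(g)}_k| \leq M \lambda^{k} |\bm{\delta}^{(g)}_0| + \frac{M |\bm{B}_c|}{1-\lambda} ||\bm{d}_1|| + \frac{M |\bm{I}_N \otimes \bm{G}|}{1-\lambda} ||\bm{d}_2||$, and identifying $\beta(s,k) = M \lambda^{k} s$ as a class $\mathcal{KL}$ function together with the linear (hence class $\mathcal{K}$) maps $\gamma_1(s) = \frac{M |\bm{B}_c|}{1-\lambda} s$ and $\gamma_2(s) = \frac{M |\bm{I}_N \otimes \bm{G}|}{1-\lambda} s$ matches Eq. \eqref{ISS definition} exactly, establishing that the global error system is ISS.
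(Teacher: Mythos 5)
Your core argument is exactly the paper's own proof: view Eq.~\eqref{global disagreement error dynamics} as a linear system driven by the two inputs $\bm{e}^{(g)}_k=\bm{E}^{(g)}_{k|k}\bm{z}^{(g)}_{k|k}$ and $\bm{w}^{(g)}_k$, use conditions (ii)--(iii) via Lemma~\ref{Lemma: Choices of c and K for A_c Schur} to conclude $\rho(\bm{A}_c)<1$ and hence $|\bm{A}_c^k|\le\alpha\mu^k$, unroll the recursion, sum the geometric series, and read off linear class $\mathcal{K}$ and $\mathcal{KL}$ comparison functions; this matches the paper's choices in Eq.~\eqref{ISS property related functions} and is complete and correct as written.

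The one divergence is the middle step you flag as ``the main obstacle'': establishing a uniform bound $|\bm{P}^{(i)}_{k|k}|\le\bar{p}$ on the SDP-generated correction shape matrices so that the realized estimation-error channel is bounded. Two remarks. First, this step is not needed for the theorem: Definition~\ref{Definition: ISS} quantifies over \emph{all} input pairs in $l_{\infty}$, so ISS is a property of the system map $(\bm{\delta}^{(g)}_0,\bm{e}^{(g)},\bm{w}^{(g)})\mapsto\bm{\delta}^{(g)}$, and one only has to verify the estimate \eqref{ISS definition} for every bounded input pair --- which your final unrolling step already does. Whether the particular signals produced in closed loop by the SMFs are bounded is a separate question; the paper sidesteps it by treating $\bm{e}^{(g)}$ and $\bm{w}^{(g)}$ as inputs and simply declaring it ``understood'' that they lie in $l^{nN}_{\infty}$ and $l^{wN}_{\infty}$. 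Second, had this step actually been load-bearing, your proposal would be incomplete: you assert, but do not prove, that $p_0,\bar{q},\bar{r}$ from Assumption~\ref{Assumption 2} force the trace-minimizing $\bm{P}^{(i)}_{k|k}$ to stay bounded over an infinite horizon, and this is genuinely nontrivial since the shape matrices are defined only implicitly through the per-step SDPs (some feasibility/observability-type argument would be required). Notably, the paper never proves such a bound either; even its corollary replaces it with the qualitative hypothesis that the SMFs ``are performing adequately,'' which yields $||\bm{e}^{(g)}||\le|\bm{e}^{(g)}_0|$. So either drop the boundedness claim or state it as an explicit assumption on the realized signals; with that adjustment your proof coincides with the paper's.
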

\begin{proof}
The proof is inspired by Example 3.4 in Ref. \cite{Jiang_Wang_2001_Automatica}. Let us denote $\bm{e}^{(g)}_k = \text{col} [\bm{e}^{(1)}_k, \dots, \bm{e}^{(N)}_k]$ where $\bm{e}^{(i)}_k = \bm{x}^{(i)}_k - \hat{\bm{x}}^{(i)}_{k|k}$ are the state estimation errors of agent $i$ at the correction steps. Now, using Eq. \eqref{global true state and corrected state estimates}, we have $\bm{e}^{(g)}_k = \bm{x}^{(g)}_k - \hat{\bm{x}}^{(g)}_{k|k} = \bm{E}^{(g)}_{k|k} \bm{z}^{(g)}_{k|k}$. Similarly, let us denote $\bm{e}^{(g)}_0 = \text{col} [\bm{e}^{(1)}_0, \dots, \bm{e}^{(N)}_0]$ where $\bm{e}^{(i)}_0 = \bm{x}^{(i)}_0 - \hat{\bm{x}}^{(i)}_{0}$ is the initial estimation error of agent $i$. Due to Assumption \ref{Assumption 2}, we have 
\begin{equation} \label{Initial global estimation error}
\bm{e}^{(g)}_0 = \bm{E}^{(g)}_{0} \bm{z}^{(g)}_0
\end{equation}
with $\bm{E}^{(g)}_0 = \text{diag} (\bm{E}^{(1)}_0, \dots, \bm{E}^{(N)}_0)$, $\bm{z}^{(g)}_0 = \text{col} [\bm{z}^{(1)}_0, \dots, \bm{z}^{(N)}_0]$ where $\bm{E}^{(i)}_{0} \left(\bm{E}^{(i)}_{0} \right)^\text{T} = \bm{P}^{(i)}_{0}$ and $|\bm{z}^{(i)}_{0}| \leq 1$ for $i = 1,\dots, N$. Then, Eq. \eqref{global disagreement error dynamics} becomes
\begin{equation*}  \label{global disagreement error-modified form}
\bm{\delta}^{(g)}_{k+1} = \bm{A}^{k+1}_c \bm{\delta}^{(g)}_{0} + \sum_{j=0}^{k} \bm{A}_c^j \bm{B}_c \bm{e}^{(g)}_{k-j} +  \sum_{j=0}^{k} \bm{A}_c^j (\bm{I}_N \otimes \bm{G}) \bm{w}^{(g)}_{k-j} 
\end{equation*}
where $\bm{e}^{(g)}: \mathbb{Z}_{\star} \rightarrow \mathbb{R}^{nN},\ \bm{w}^{(g)}: \mathbb{Z}_{\star} \rightarrow \mathbb{R}^{wN}$ are the inputs. It is understood that $\bm{e}^{(g)} \in l^{nN}_{\infty}$ and $\bm{w}^{(g)} \in l^{wN}_{\infty}$. Due to the choices of $c$ and $\bm{K}$ along with Assumptions \ref{Assumptions on A and B matrices} and \ref{Assumption on the graph structure}, we have $\rho(\bm{A}_c) < 1$. Hence, there exist constants $\alpha>0$ and $\mu \in [0,1)$ such that $|\bm{A}_c^k| \leq \alpha \mu^k, \ k \in \mathbb{Z}_{\star}$ \cite{Jiang_Wang_2001_Automatica}. Then, the ISS property in Eq. \eqref{ISS definition} holds for the system in Eq. \eqref{global disagreement error dynamics} with
\begin{equation}  \label{ISS property related functions}
\begin{split}
& \beta(s,k) = \alpha \mu^k s, \ \gamma_1(s_1) = \sum_{j=0}^{\infty} \alpha \mu^j |\bm{B}_c| s_1 = \frac{\alpha |\bm{B}_c| s_1}{1 - \mu}, \\
& \gamma_2 (s_2) = \sum_{j=0}^{\infty} \alpha \mu^j |\bm{G}| s_2 = \frac{\alpha |\bm{G}| s_2}{1 - \mu} 
\end{split}
\end{equation} 
where $|(\bm{I}_N \otimes \bm{G})| = |\bm{I}_N| \ |\bm{G}| = |\bm{G}|$ is utilized. Thus, along the trajectories of the system in Eq. \eqref{global disagreement error dynamics}, for each $k \in \mathbb{Z}_\star$, it holds that 
\begin{equation} \label{global error system - ISS property}
|\bm{\delta}_k^{(g)}| \leq \beta(|\bm{\delta}_0^{(g)}|,k) + \gamma_1 (||\bm{e}^{(g)}||) + \gamma_2 (||\bm{w}^{(g)}||)
\end{equation}
where the functions $\beta, \gamma_1, \gamma_2$ are as in Eq. \eqref{ISS property related functions} with $s = |\bm{\delta}_0^{(g)}|$, $s_1 = ||\bm{e}^{(g)}||$, $s_2 = ||\bm{w}^{(g)}||$. This completes the proof. \qed
\end{proof}
Theorem \ref{Theorem: the global disagreement error dynamics is ISS} implies that the global disagreement error remains bounded under the proposed synchronization protocol. 
\begin{remark}
Objective of the SMF-based synchronization in Ref. \cite{Ge_et_al_2016} was to contain the states of the agents in a confidence ellipsoid which might not be small in general. Thus, the approach outlined in Ref. \cite{Ge_et_al_2016} may lead to conservative results where the states of the agents might not converge to a neighborhood of the leader's state trajectory. On the other hand, we have shown that, under appropriate conditions, the global error system is ISS with respect to the input disturbances and estimation errors. Since an ISS system admits the `converging-input converging-state' property (see, Refs. \cite{Jiang_Wang_2001_Automatica, Sontag_2003} for details), $|\bm{\delta}^{(g)}_k|$ would eventually converge to a neighborhood of zero as the estimation errors 
of the agents decrease. Thus, the agents would converge to a neighborhood of the leader's state trajectory. To this end, it is understood that $||\bm{w}^{(g)}||$ is relatively small (compared to $|\bm{\delta}^{(g)}_0|$ and $||\bm{e}^{(g)}||$) as the input disturbances satisfy Assumption \ref{Assumption 2 - process and measurement noise ellipsoids}.
\end{remark}
Next, we state the following result based on Theorem \ref{Theorem: the global disagreement error dynamics is ISS} where $p_0$ and $\bar{q}$ are as in Assumption \ref{Assumption 2}.
\begin{corollary}
Under the conditions of Theorem \ref{Theorem: the global disagreement error dynamics is ISS}, the normalized global disagreement error $\bar{\bm{\delta}}^{(g)}_k$ satisfies  
\begin{equation} \label{final bound of |delta_k|}
\lim_{k \rightarrow \infty} |\bar{\bm{\delta}}^{(g)}_{k}| \leq (|\bm{B}_c| \sqrt{p_0} + |\bm{G}| \sqrt{\bar{q}})
\end{equation}
with $\bar{\bm{\delta}}^{(g)}_k = \left( \bm{\delta}^{(g)}_k /\bar{\mu} \right) $ where 
$\bar{\mu} = \left( \alpha \sqrt{N} / (1-\mu) \right)$ and $\alpha > 0$, $\mu \in [0,1)$ are such that $|\bm{A}_c^k| \leq \alpha \mu^k$ for all $k \in \mathbb{Z}_{\star}$.
\end{corollary}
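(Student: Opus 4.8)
The plan is to start from the input-to-state-stability estimate already furnished by Theorem \ref{Theorem: the global disagreement error dynamics is ISS}, namely $|\bm{\delta}^{(g)}_k| \le \beta(|\bm{\delta}^{(g)}_0|,k) + \gamma_1(||\bm{e}^{(g)}||) + \gamma_2(||\bm{w}^{(g)}||)$ in Eq. \eqref{global error system - ISS property}, with the explicit gains $\beta(s,k)=\alpha\mu^k s$, $\gamma_1(s_1)=\alpha|\bm{B}_c|s_1/(1-\mu)$, $\gamma_2(s_2)=\alpha|\bm{G}|s_2/(1-\mu)$ from Eq. \eqref{ISS property related functions}, and to turn it into the claimed asymptotic bound by three operations: passing to the limit $k\to\infty$, dividing by the normalizing constant $\bar{\mu}$, and replacing $||\bm{e}^{(g)}||$ and $||\bm{w}^{(g)}||$ by uniform bounds drawn from Assumption \ref{Assumption 2}. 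Since $\mu\in[0,1)$, the transient term $\alpha\mu^k|\bm{\delta}^{(g)}_0|$ vanishes, so that $\lim_{k\to\infty}|\bm{\delta}^{(g)}_k| \le \frac{\alpha|\bm{B}_c|}{1-\mu}||\bm{e}^{(g)}|| + \frac{\alpha|\bm{G}|}{1-\mu}||\bm{w}^{(g)}||$.

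Next I would divide through by $\bar{\mu}=\alpha\sqrt{N}/(1-\mu)$. This particular normalization is chosen precisely so that the common factor $\alpha/(1-\mu)$ in $\gamma_1$ and $\gamma_2$ cancels and a clean $1/\sqrt{N}$ survives, yielding $\lim_{k\to\infty}|\bar{\bm{\delta}}^{(g)}_k| \le \frac{|\bm{B}_c|}{\sqrt{N}}||\bm{e}^{(g)}|| + \frac{|\bm{G}|}{\sqrt{N}}||\bm{w}^{(g)}||$. It then remains to establish $||\bm{e}^{(g)}|| \le \sqrt{N}\,\sqrt{p_0}$ and $||\bm{w}^{(g)}|| \le \sqrt{N}\,\sqrt{\bar{q}}$; substituting these and absorbing each $\sqrt{N}$ reproduces exactly the right-hand side $|\bm{B}_c|\sqrt{p_0}+|\bm{G}|\sqrt{\bar{q}}$ of Eq. \eqref{final bound of |delta_k|}.

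The disturbance bound is the routine half. For each agent, $\bm{w}^{(i)}_k\in\mathcal{E}(\bm{0}_w,\bm{Q}^{(i)}_k)$ means $(\bm{w}^{(i)}_k)^\text{T}(\bm{Q}^{(i)}_k)^{-1}\bm{w}^{(i)}_k\le 1$, whence $|\bm{w}^{(i)}_k|^2 \le |\bm{Q}^{(i)}_k| \le \bar{q}$ by Assumption \ref{Assumption 2}; stacking over the $N$ agents gives $|\bm{w}^{(g)}_k|^2=\sum_i|\bm{w}^{(i)}_k|^2\le N\bar{q}$ uniformly in $k$, hence $||\bm{w}^{(g)}||\le\sqrt{N}\,\sqrt{\bar{q}}$. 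For the estimation error I would use $\bm{e}^{(g)}_k=\bm{E}^{(g)}_{k|k}\bm{z}^{(g)}_{k|k}$ established in Eq. \eqref{global true state and corrected state estimates} together with $|\bm{z}^{(i)}_{k|k}|\le 1$, so that $|\bm{e}^{(i)}_k| \le |\bm{E}^{(i)}_{k|k}| = \sqrt{|\bm{P}^{(i)}_{k|k}|}$ (using $\bm{E}^{(i)}_{k|k}(\bm{E}^{(i)}_{k|k})^\text{T}=\bm{P}^{(i)}_{k|k}$), and consequently $|\bm{e}^{(g)}_k|^2 = \sum_i |\bm{e}^{(i)}_k|^2 \le N\max_i|\bm{P}^{(i)}_{k|k}|$.

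The main obstacle is the final link: to conclude I need the \emph{uniform} shape-matrix bound $|\bm{P}^{(i)}_{k|k}| \le p_0$ for every $k$, whereas Assumption \ref{Assumption 2} supplies it only at $k=0$. I would try to obtain this from the structure of the SMF in Algorithm \ref{SMF algorithm}: the correction SDP of Theorem \ref{Theorem: Correction step} is a trace minimization for which $(\bm{L}_k,\bm{P}_{k|k})=(\bm{0},\bm{P}_{k|k-1})$ is feasible, so the optimal corrected ellipsoid never exceeds the incoming predicted one in size, and one then controls the prediction SDP of Theorem \ref{Theorem: Prediction step} through the uniform process-noise bound $\bar{q}$. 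This is the delicate point, because the prediction step inflates the ellipsoid by the disturbance term and so monotone non-increase of $|\bm{P}^{(i)}_{k|k}|$ is not automatic; the argument must show that the resulting steady bound can be subsumed into $p_0$. Once this uniform bound is secured, the remaining substitutions are immediate and the stated limit in Eq. \eqref{final bound of |delta_k|} follows.
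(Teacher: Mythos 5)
Your route coincides with the paper's almost step for step: the same ISS estimate \eqref{global error system - ISS property}, the same vanishing of the transient term, the same normalization by $\bar{\mu}$ (chosen so that $\alpha/(1-\mu)$ cancels), and the same bound $||\bm{w}^{(g)}|| \leq \sqrt{N\bar{q}}$ from Assumption \ref{Assumption 2}. The place where you stall, however, is precisely where the paper does something different from what you attempt. You try to establish a uniform-in-$k$ bound $|\bm{P}^{(i)}_{k|k}| \leq p_0$, and your own diagnosis of why this is hard is correct: the correction SDP gives at best non-increase relative to the incoming predicted ellipsoid, while the prediction SDP re-inflates the ellipsoid by the disturbance term, so no monotonicity of $|\bm{P}^{(i)}_{k|k}|$ follows from the recursion. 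As stated, your proposal therefore cannot be completed from Assumption \ref{Assumption 2} and the SMF structure alone; this is a genuine gap.

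What you should know is that the paper does not close this gap either --- it steps around it. Its proof inserts the explicit qualitative hypothesis that ``the SMFs of the agents are performing adequately,'' interpreted as $||\bm{e}^{(g)}|| \leq |\bm{e}^{(g)}_0|$, i.e.\ the supremum over all time of the global estimation error is dominated by the initial error. With that, only initial data are needed: by Eq. \eqref{Initial global estimation error}, $|\bm{e}^{(g)}_0| \leq |\bm{E}^{(g)}_0|\,|\bm{z}^{(g)}_0|$, and Assumption \ref{Assumption 2} gives $|\bm{E}^{(g)}_0| = \max_i |\bm{E}^{(i)}_0| \leq \sqrt{p_0}$ together with $|\bm{z}^{(g)}_0| \leq \sqrt{N}$, hence $||\bm{e}^{(g)}|| \leq \sqrt{p_0 N}$. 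So the $\sqrt{p_0}$ in Eq. \eqref{final bound of |delta_k|} refers only to the initial shape matrices; no claim about $\bm{P}^{(i)}_{k|k}$ for $k \geq 1$ is made or needed, and the corollary is really conditional on this unproven assumption, which the paper acknowledges as a source of conservatism and verifies only numerically (Fig. \ref{Estimation_error_norm_agents}). The honest fix for your version is the same move: either adopt $||\bm{e}^{(g)}|| \leq |\bm{e}^{(g)}_0|$ (or $\sup_k |\bm{P}^{(i)}_{k|k}| \leq p_0$) as an explicit hypothesis, or supply a boundedness proof for the SDP recursion --- which neither you nor the paper currently has.
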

\begin{proof}
Under the conditions of Theorem \ref{Theorem: the global disagreement error dynamics is ISS}, the result in Eq. \eqref{global error system - ISS property} holds. Then, let us rewrite Eq. \eqref{global error system - ISS property} as
\begin{equation*}
|\bm{\delta}^{(g)}_{k}| \leq \alpha \mu^{k} |\bm{\delta}^{(g)}_{0}| + \left( \alpha / (1-\mu) \right) \left( |\bm{B}_c| \ ||\bm{e}^{(g)}|| + |\bm{G}| \ ||\bm{w}^{(g)}|| \right)
\end{equation*} 
Now, under the assumption that the SMFs of the agents are performing adequately, 
we can utilize Eq. \eqref{Initial global estimation error} and take $||\bm{e}^{(g)}|| \leq |\bm{e}^{(g)}_0| \leq |\bm{E}^{(g)}_{0}| \ |\bm{z}^{(g)}_0|$. Using Assumption \ref{Assumption 2}, we have $|\bm{E}^{(g)}_{0}| = \text{max} (|\bm{E}^{(1)}_0|, \dots, |\bm{E}^{(N)}_0|) \Rightarrow |\bm{E}^{(g)}_{0}| \leq \sqrt{p_0}$. Also, we have $|\bm{z}^{(g)}_0| \leq \sqrt{N}$. Therefore, we derive $||\bm{e}^{(g)}|| \leq \sqrt{p_0 N}$. Similarly, Assumption \ref{Assumption 2} leads to $||\bm{w}^{(g)}|| \leq \sqrt{\bar{q} N}$. Combining these, we calculate the following bound on $\bm{\delta}^{(g)}_{k}$ 
\begin{equation} \label{monotonically decreasing upper bound on the norm of global disagreement error}
|\bm{\delta}^{(g)}_{k}| \leq \alpha \mu^{k} |\bm{\delta}^{(g)}_{0}| + \bar{\mu} \left( |\bm{B}_c| \sqrt{p_0} + |\bm{G}| \sqrt{\bar{q}} \right)
\end{equation}
for each $k \in \mathbb{Z}_{\star}$ with $\bar{\mu} = \left( \alpha \sqrt{N} / (1-\mu) \right)$. 
Hence, the proof is completed by taking the limit in Eq. \eqref{final bound of |delta_k|} and carrying out the normalization $\bar{\bm{\delta}}^{(g)}_k = \left( \bm{\delta}^{(g)}_k /\bar{\mu} \right)$. \qed 
\end{proof}

\begin{remark}
The upper bound shown in Eq. \eqref{monotonically decreasing upper bound on the norm of global disagreement error} is monotonically decreasing. The estimate given in Eq. \eqref{final bound of |delta_k|} is a conservative one as we have utilized $||\bm{e}^{(g)}|| \leq \sqrt{p_0 N}$ and $||\bm{w}^{(g)}|| \leq \sqrt{\bar{q} N}$. Also, the bound $|\bm{R}^{(i)}_k| \leq \bar{r}$ does not appear in Eqs. \eqref{final bound of |delta_k|} and \eqref{monotonically decreasing upper bound on the norm of global disagreement error} as a result of utilizing $||\bm{e}^{(g)}|| \leq |\bm{e}^{(g)}_0| \leq |\bm{E}^{(g)}_{0}| \ |\bm{z}^{(g)}_0|$. However, the true value of $\bm{e}^{(i)}_{k}$ would depend on $\bm{v}^{(i)}_k$ and, thus, on $\bm{R}^{(i)}_k$ for all $k \in \mathbb{Z}_\star$ and all $i=1,2,\dots,N$.
\end{remark}
\begin{remark}
For a given multi-agent system (with the number of agents $N$, the matrices $\bm{A}, \ \bm{B}, \ \bm{C}, \ \bm{D}, \ \bm{G}$, and the interaction graph specified), we have $|\bm{B}_c|$ and $|\bm{G}|$ fixed once $c$ and $\bm{K}$ are properly chosen using Lemma \ref{Lemma: Choices of c and K for A_c Schur}. Thus, the conservatism of the bound in Eq. \eqref{final bound of |delta_k|} can be reduced if the available upper bounds (i.e., $p_0$ and $\bar{q}$) are sufficiently small. 
\end{remark} 
\vspace{-0.5cm}
\section{Simulation Examples}  \label{Simulation Example}
Simulation examples are provided in this section to illustrate the effectiveness of the proposed SMF and SMF-based leader-follower synchronization protocol. All the simulations are carried out on a desktop computer with a 16.00 GB RAM and a 3.40 GHz Intel(R) Xeon(R) E-2124 G processor running MATLAB R2019a. The SDPs in Eqs. \eqref{The complete problem statement-1} and \eqref{The complete problem statement-2} are solved utilizing `YALMIP' \cite{Lofberg_2004} with the `SDPT3' solver in the MATLAB framework. Since the disturbances are only assumed to be unknown-but-bounded, different kinds of disturbance realizations are possible that satisfy the ellipsoidal assumptions (Assumptions \ref{Assumption 1 - process and measurement noise ellipsoids} and \ref{Assumption 2 - process and measurement noise ellipsoids}). For example, periodic disturbances with time-varying or constant frequencies and amplitudes, random disturbances with each element being uniformly distributed in an interval, and so on.  
\vspace{-0.25cm}
\subsection{Example-1}
In this example, we illustrate the effectiveness of the proposed SMF algorithm and compare our results with the results obtained for the SMF in Ref. \cite{Balandin_et_al_Automatica_2020} (the discrete version). We choose a system governed by the Mathieu equation \cite{Balandin_et_al_Automatica_2020} for this example, i.e., the system given by
\begin{equation} \label{Mathieu equation}
\begin{split}
\dot{x}_1 &= x_2, \\
\dot{x}_2 &= - \omega_0^2 (1 + \epsilon \sin \omega t) x_1 + w_d
\end{split}
\end{equation}  
which is expressed in a compact form as
\begin{equation}
\dot{\bm{x}} = \begin{bmatrix}
\dot{x}_1 \\ \dot{x}_2
\end{bmatrix} = \bm{A}(t)  \bm{x}  + \bm{G} w_d
\end{equation}
where 
\begin{equation}
\bm{A} (t) = \begin{bmatrix}
0 && 1\\
- \omega_0^2 (1 + \epsilon \sin \omega t) && 0
\end{bmatrix}, \quad \bm{G} = \begin{bmatrix}
0 \\ 1
\end{bmatrix}
\end{equation}
with $w_d$ as the input disturbance. Utilizing zeroth-order hold (ZOH) with a sampling time $\Delta t$, the above system is put into an equivalent discrete-time form as 
\begin{equation}
\bm{x}_{k+1} = \bm{A}_k \bm{x}_k + \bm{G}_k w_k
\end{equation}
where $\bm{x}_{(\cdot)} = [x_{1_{(\cdot)}} \quad  x_{2_{(\cdot)}} ]^\text{T}$ and $w_k$ is the input disturbance at the current time-step. The measured outputs are considered as $y_k = x_{1_{k}} + v_k$.
\begin{figure}[!hbt]
\centering
\includegraphics[width=0.9\columnwidth]{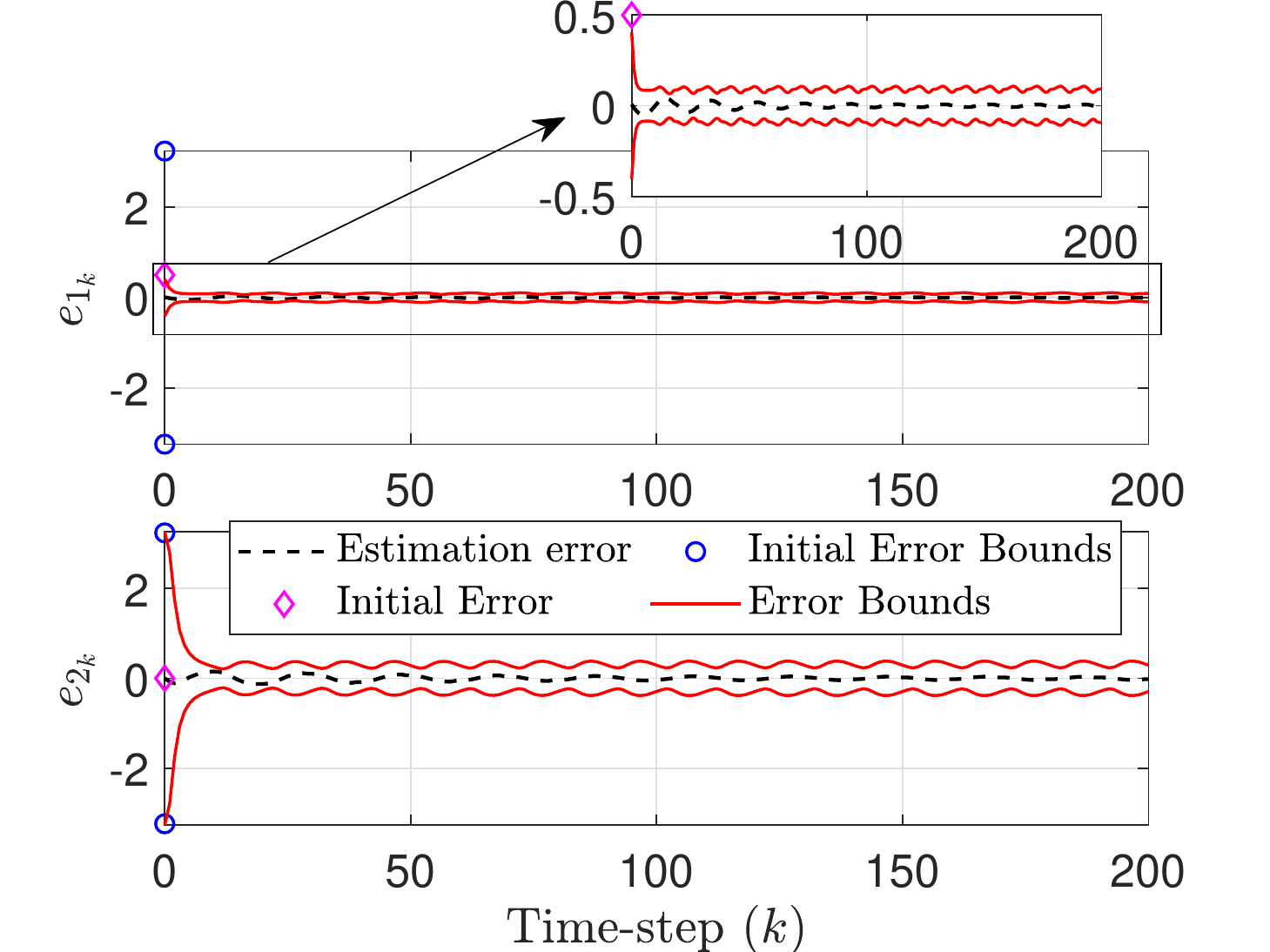}
\caption{Estimation errors 
and corresponding error bounds for the proposed SMF (Example-1).} 
\label{SMF_estimation_error_bounds}
\end{figure}
Thus, we have $\bm{C}_k = [1 \quad 0]$, $\bm{D}_k = 1$. The system parameters, $\Delta t$, disturbances, disturbance ellipsoid shape matrices, and initial conditions chosen are as follows:
\begin{equation}  \label{Parameter values for the SMF}
\begin{split}
\omega &= 2 \pi, \ \omega_0 = \pi, \ \epsilon = 0.3, \ \Delta t = 0.1 \ \rm{seconds}  \\
w_k &= 0.05 \sin (\omega t_k), \ v_k = w_k, \ \bm{Q}_k = 0.0025, \ \bm{R}_k = \bm{Q}_k, \\
\bm{x}_0 &= [0.5 \quad 0]^\text{T}, \ \hat{\bm{x}}_0 = \bm{0}_2, \ \bm{P}_0 = 10.5 \bm{I}_2
\end{split}
\end{equation}
With the above initial conditions, disturbances and disturbance ellipsoid shape matrices, Assumption \ref{Assumption 1} is satisfied, and we implement the proposed SMF in Algorithm \ref{SMF algorithm} with $T_f = 200$. The simulation results are shown in Figs. \ref{SMF_estimation_error_bounds} and \ref{SMF_phase_plane}. The estimation errors and error bounds shown in Fig. \ref{SMF_estimation_error_bounds} are corresponding to the correction steps. Thus, we have $\bm{e}_k = \bm{x}_k - \hat{\bm{x}}_{k|k} = [e_{1_{k}} \quad  e_{2_{k}} ]^\text{T} $. 
As shown in Fig. \ref{SMF_estimation_error_bounds}, the estimation errors remain within the corresponding error bounds for the entire time-horizon considered. Thus, the true state is contained in the correction ellipsoids for the entire time-horizon too. The error bounds are time-varying for this example as the dynamical system considered here is time-varying. Further, the error bounds decrease significantly from the corresponding initial values, evidencing the ongoing optimization process for the SMF. The phase plane plot of the true states and corrected state estimates are shown in Fig. \ref{SMF_phase_plane}. Since the estimation errors are small (as shown in Fig. \ref{SMF_estimation_error_bounds}), the true state and corrected state estimate trajectories remain in a close neighborhood of each other. This is depicted in Fig. \ref{SMF_phase_plane}. Also, the zoomed-in plot in Fig. \ref{SMF_phase_plane} shows that the SMF is able to bring the corrected state estimate in a neighborhood of the true state after the correction step at $k=0$. This explains the small $e_{1_{k}}$ at $k=0$ compared to the initial error (see the zoomed-in plot in Fig. \ref{SMF_estimation_error_bounds}). 
\begin{figure}[!hbt]
\centering
\includegraphics[width=0.9\columnwidth]{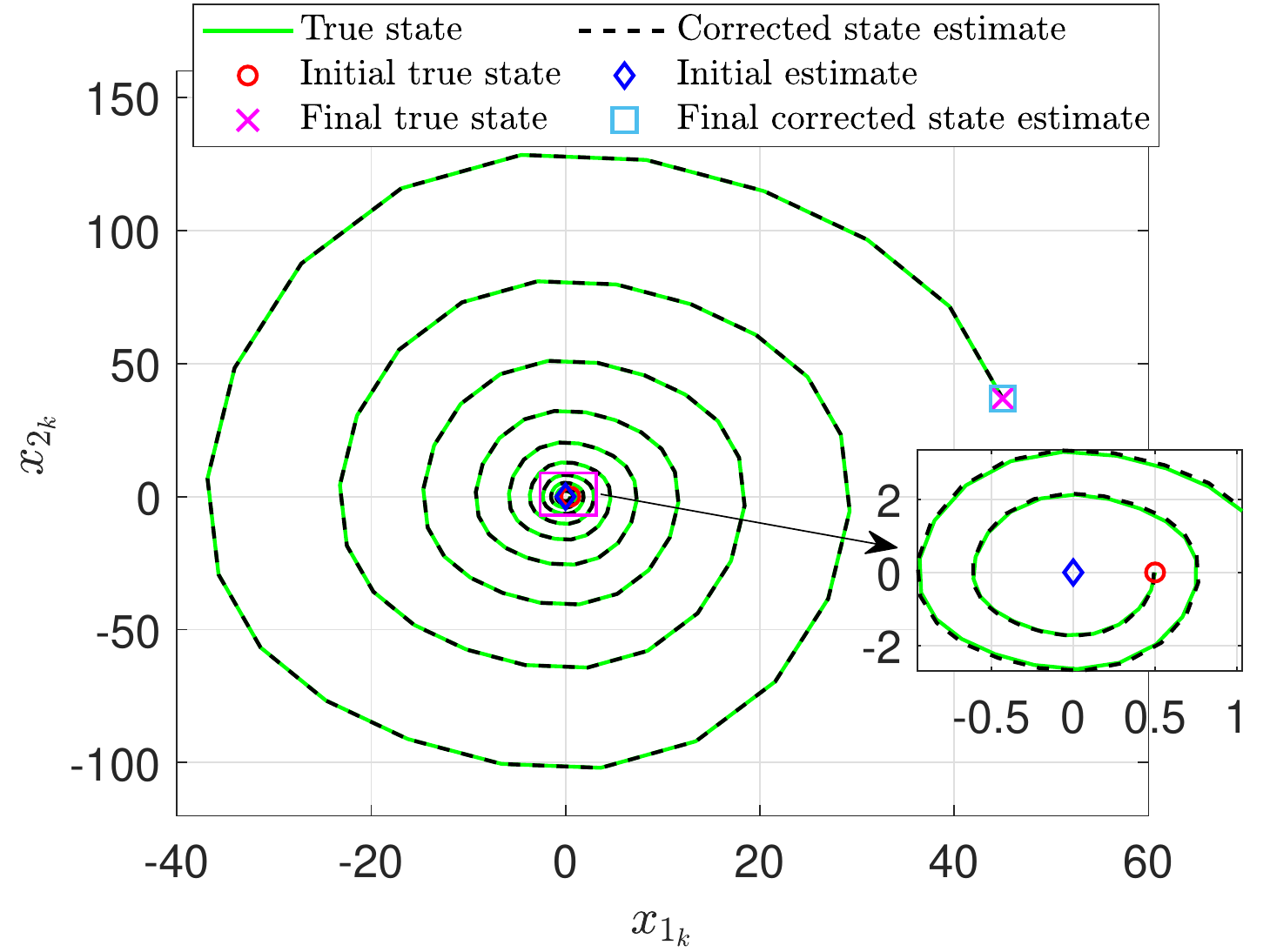}
\caption{The true state and corrected state estimate trajectories in the phase plane (Example-1).} 
\label{SMF_phase_plane}
\end{figure}
\begin{table}[!hbt] 
\centering
\caption{Estimation error comparisons over $T = 201$ time-steps ($T_f = 200$)} \label{Table - Comparison}
\begin{tabular}{|c|c|c|} 
\hline
Item & Proposed SMF & Balandin et al. (Ref. \cite{Balandin_et_al_Automatica_2020}) \\ \hline
$\frac{1}{T} \sum |\bm{e}_k| $ & 0.0434 & 0.0477 \\ \hline
$\frac{1}{T} \sum e^2_{1_{k}}$ & 0.0002 & 0.0015 \\ \hline
$\frac{1}{T} \sum e^2_{2_{k}}$ & 0.00267 & 0.0028 \\ \hline
\end{tabular}
\end{table}

Next, we compare the results of the proposed SMF with the ones corresponding to the SMF framework given in Balandin et al. (Ref. \cite{Balandin_et_al_Automatica_2020}, the discrete version). The system parameters, $\Delta t$, disturbances, and initial conditons considered for both the frameworks are as shown in Eq. \eqref{Parameter values for the SMF}. However, the disturbance ellipsoid shape matrices for the framework in Ref. \cite{Balandin_et_al_Automatica_2020} are adopted from the example in section 3 in Ref. \cite{Balandin_et_al_Automatica_2020}. Whereas, for the proposed SMF, the disturbance ellipsoid shape matrices are as shown in Eq. \eqref{Parameter values for the SMF}. The difference in the disturbance ellipsoid shape matrices are due to the different kinds of ellipsoidal assumptions utilized in this paper, compared to the ones in Ref. \cite{Balandin_et_al_Automatica_2020}. However, these values are chosen such that both sets of disturbance ellipsoid shape matrices are equivalent. The comparison results are shown in Figs. \ref{SMF_comparison_est_error}, \ref{SMF_comparison_trace}. Fig. \ref{SMF_comparison_est_error} shows the comparison in estimation error norms 
where estimation errors at the correction steps for the proposed SMF are depicted. The proposed SMF is able to reduce the error norm at $k=0$ due to the initial correction step (see the zoomed-in plot in Fig. \ref{SMF_comparison_est_error}). After that, both the SMFs have qualitatively similar error norms. Table \ref{Table - Comparison} illustrates quantitative comparisons of the estimation errors. Clearly, the proposed SMF outperforms the SMF in Ref. \cite{Balandin_et_al_Automatica_2020} in terms of the mean absolute error ($\frac{1}{T} \sum |\bm{e}_k| $) and mean squared errors ($\frac{1}{T} \sum e^2_{1_{k}}$, $\frac{1}{T} \sum e^2_{2_{k}}$). 
\begin{figure}[!hbt]
\centering
\includegraphics[width=0.8\columnwidth]{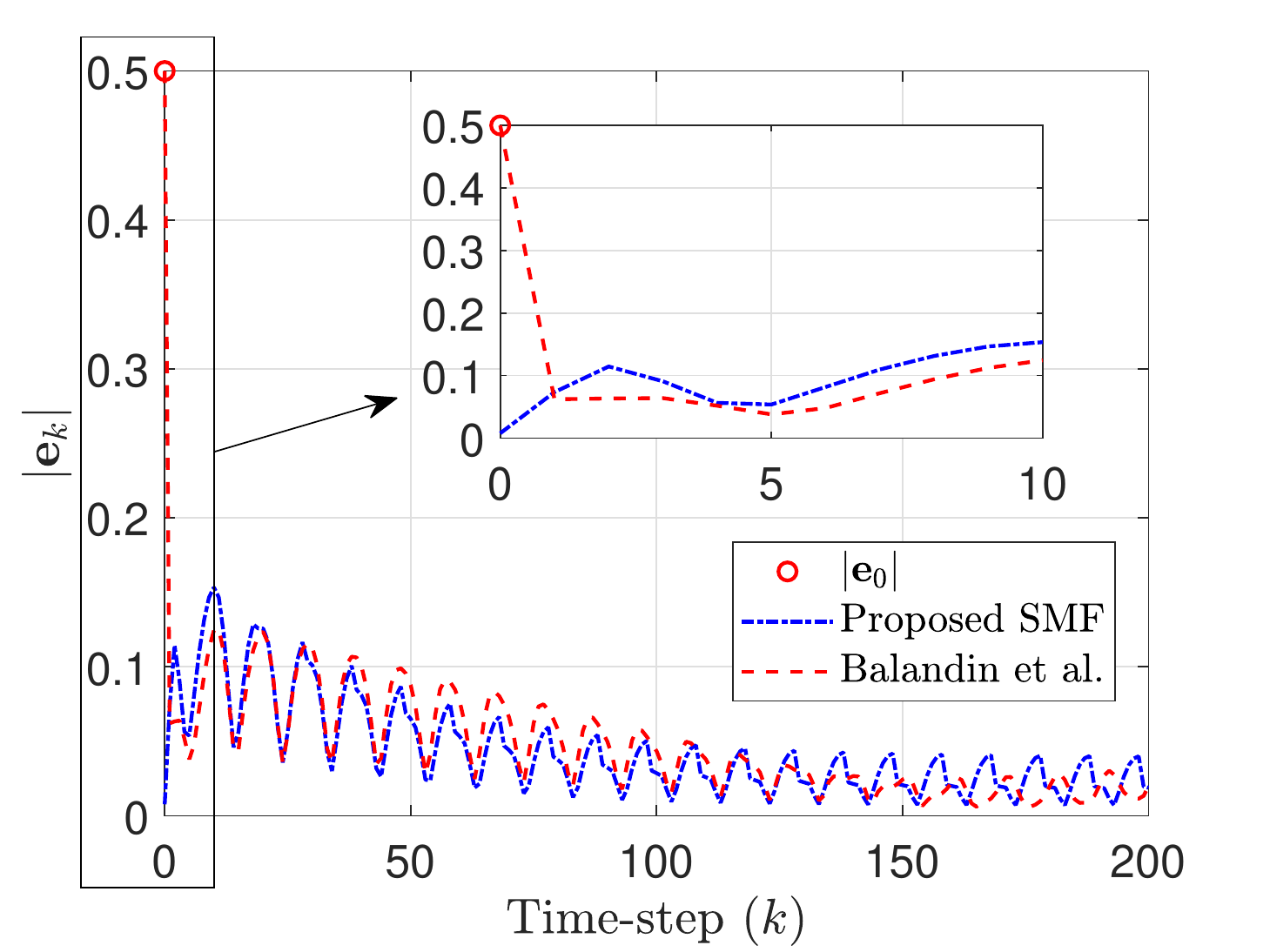}
\caption{Comparison of the estimation error norms where $|\bm{e}_0| = |\bm{x}_0 - \hat{\bm{x}}_0|$ (Example-1).} 
\label{SMF_comparison_est_error}
\end{figure}
\begin{figure}[!hbt]
\centering
\includegraphics[width=0.8\columnwidth]{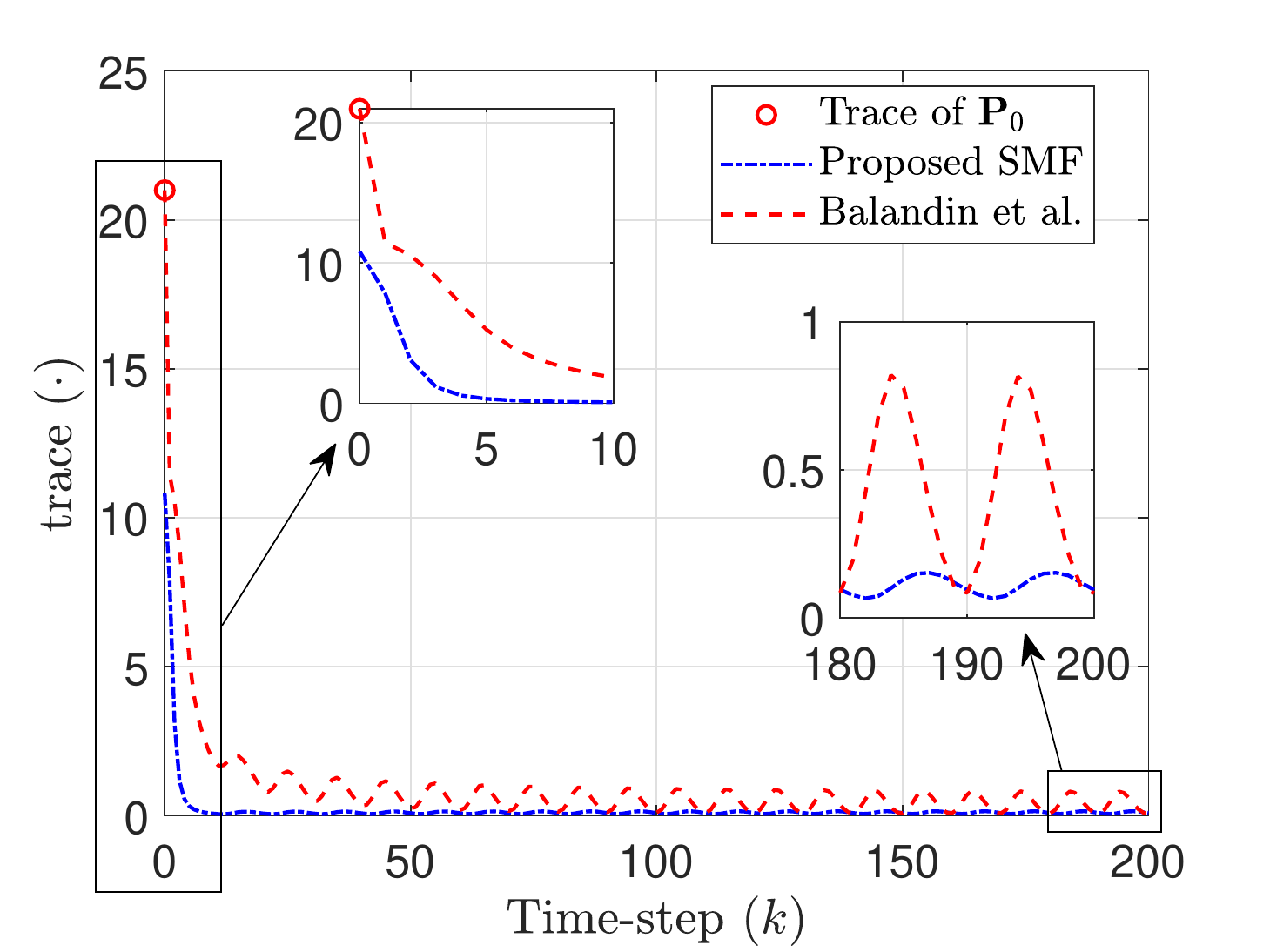}
\caption{Comparison of the trace of the ellipsoid shape matrices (Example-1).} 
\label{SMF_comparison_trace}
\end{figure}

The trace of the shape matrices, corresponding to the correction ellispoids of the proposed SMF and the estimation ellipsoids of the SMF in Ref. \cite{Balandin_et_al_Automatica_2020}, are shown in Fig. \ref{SMF_comparison_trace}. These results show that overall the correction ellipsoids of the proposed SMF are smaller in `size' compared to the estimation ellipsoids of the SMF in Ref. \cite{Balandin_et_al_Automatica_2020}. Thus, the error bounds shown in Fig. \ref{SMF_estimation_error_bounds} for the proposed SMF are tighter compared to the ones for the SMF in Ref. \cite{Balandin_et_al_Automatica_2020}. Also, the proposed SMF is able to reduce the `size' of the correction ellipsoid at $k=0$ due to the initial correction step, as shown in the zoomed-in plot in Fig. \ref{SMF_comparison_trace}. Note that the proposed SMF employs a two-step filtering approach wherein two SDPs are solved during every filter recursion which results in optimal (minimum trace) correction and prediction ellipsoids (with the respective state estimates at the centers). On the other hand, the SMF in Ref. \cite{Balandin_et_al_Automatica_2020} employs a one-step filtering technique with a combined correction and prediction step. Thus, the optimization process for estimation happens only once during each recursion of the SMF in \cite{Balandin_et_al_Automatica_2020}. This is likely the reason for better overall performance of the proposed SMF in this example. 
\vspace{-0.2cm}
\subsection{Example-2}
In this example, we illustrate results of the proposed SMF-based leader-follower synchronization protocol. We consider four agents, i.e., $N=4$. Matrices related to the dynamics of the leader and the agents are
\begin{equation}
\bm{A} = \begin{bmatrix}
0 & -1 \\
1 & 0
\end{bmatrix}, \ \bm{B} = \bm{I}_2, \ \bm{C} = [1 \quad 0], \ \bm{D} = 1, \ \bm{G} = \bm{I}_2
\end{equation}
\begin{figure}[!hbt]
\centering
\subfigure[]{\includegraphics[width=0.375\columnwidth]{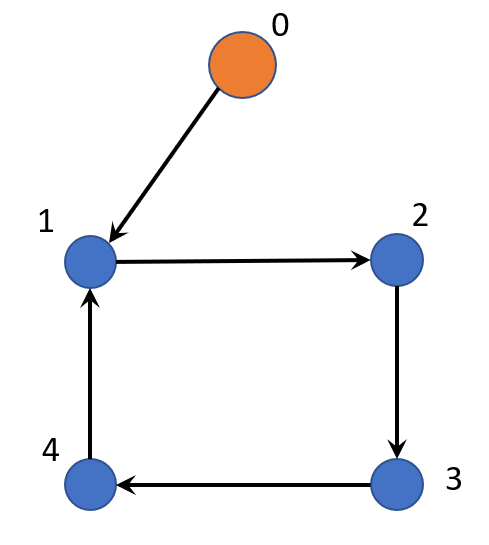} \label{Communication_topology}}
\subfigure[]{\includegraphics[width=0.575\columnwidth]{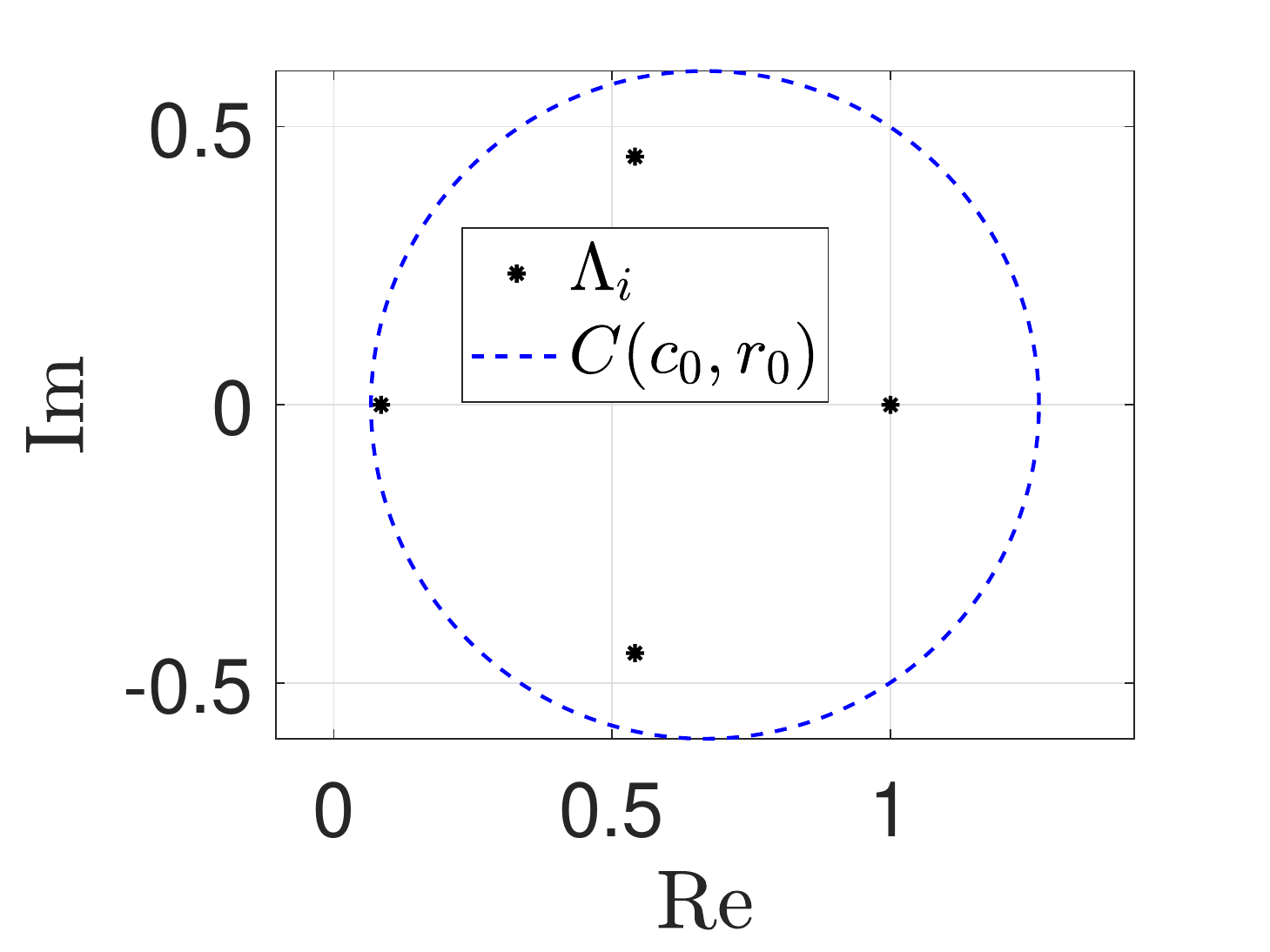} \label{Graph_related_plot}}
\caption{(a) The interaction graph and (b) eigenvalues of $\bm{\Gamma}$ ($\Lambda_i, \ i=1,2,3,4$) in the complex plane with $C(c_0,r_0)$ (Example-2).} 
\label{Communication_topology and Graph_related_plot}
\end{figure}
where $\bm{A}$ is marginally stable. Also, Assumption \ref{Assumptions on A and B matrices} is satisfied with the above choices of $\bm{A}$ and $\bm{B}$. Ellipsoidal parameters related to the SMFs of the agents are $\bm{P}^{(i)}_0 = 2 \bm{I}_{2}, \bm{Q}^{(i)}_k = 0.1 \bm{I}_{2}, \ \bm{R}^{(i)}_k = 0.1$ for $i=1,2,3,4$. The initial state estimates of the agents are as follows: $\hat{\bm{x}}^{(1)}_0 = [50 \quad -50]^\text{T}$, $\hat{\bm{x}}^{(2)}_0 = \hat{\bm{x}}^{(1)}_0 $, $\hat{\bm{x}}^{(3)}_0 = [-50 \quad 50]^\text{T}$, $\hat{\bm{x}}^{(4)}_0 = \hat{\bm{x}}^{(3)}_0 $. The true initial state for the agents 1 and 2 ($\bm{x}^{(1)}_0$, $\bm{x}^{(2)}_0$) are chosen randomly (uniform distribution) between $[50 \quad -50]^\text{T}$ and $[51 \quad -49]^\text{T}$. Similarly, the true initial state for the agents 3 and 4 ($\bm{x}^{(3)}_0$, $\bm{x}^{(4)}_0$) are chosen randomly (uniform distribution) between $[-50 \quad 50]^\text{T}$ and $[-49 \quad 51]^\text{T}$. The input disturbances ($\bm{w}^{(i)}_k$, $i=1,2,3,4$) are chosen randomly (uniform distribution) between $-0.05 \bm{1}_{2}$ and $0.05 \bm{1}_{2}$, and the output disturbances ($\bm{v}^{(i)}_k$, $i=1,2,3,4$) are chosen randomly (uniform distribution) between $-0.05$ and $0.05$. Thus, Assumption \ref{Assumption 2} has been satisfied with the above  parameters, initial conditions, and disturbance terms. The initial state of the leader is chosen as $\bm{x}^{(0)}_0 = [5 \ -5]^\text{T}$.

The interaction graph is shown in Fig. \ref{Communication_topology} and Assumption \ref{Assumption on the graph structure} holds for this interaction graph. Thus, we have
\begin{equation}
\begin{split}
\pmb{\mathscr{L}} &= \begin{bmatrix}
1 & 0 & 0 & -1 \\
-1 & 1 & 0 & 0 \\
0 & -1 & 1 & 0 \\
0 & 0 & -1 & 1
\end{bmatrix},
\end{split}
\end{equation}
$\pmb{\mathcal{G}} = \text{diag}(1,0,0,0), \pmb{\mathscr{D}} = \text{diag}(1,1,1,1)$. With regards to Lemma \ref{Lemma: Choices of c and K for A_c Schur}, we choose $\pmb{\mathcal{Q}} = 0.1 \bm{I}_2$, $c_0 = (2/3)$, $r_0 = 0.6$. Clearly, $C(c_0,r_0)$ contains all the eigenvalues of $\bm{\Gamma}$ as shown in Fig. \ref{Graph_related_plot}. Also, we have $r= 1$ and $(r_0/c_0) = 0.9 < r$. Hence, the conditions for Lemma \ref{Lemma: Choices of c and K for A_c Schur} are satisfied and we take $c= (1/c_0) = 1.5$, $\bm{K} = (\bm{B}^\text{T} \pmb{\mathcal{P}} \bm{B})^{-1} \bm{B}^\text{T} \pmb{\mathcal{P}} \bm{A}$.

The synchronization results are shown in Figs. \ref{Synchronization time history} and \ref{Global_disagreement_error_norm}. Figure \ref{Synchronization time history} shows that the trajectories of the agents converge close to that of the leader. As a result, the normalized global disagreement error norm converges to a neighborhood of zero (Fig. \ref{Global_disagreement_error_norm}). The red dotted line in Fig. \ref{Global_disagreement_error_norm} denotes the conservative bound in Eq. \eqref{final bound of |delta_k|} for which we have utilized $p_0 = 2$ and $\bar{q} = 0.1$. Also, for $\bar{\mu}$, we have taken $\alpha = 1.1$ and $\mu = 0.9$. For this choice of $\alpha$ and $\mu$, $|\bm{A}_c^k| \leq \alpha \mu^k$ is satisfied, as shown in Fig. \ref{Norm_Ac_upper_bound}. With the above values, the conservative upper bound is equal to 2.462, which is shown using the red dotted line in Fig. \ref{Global_disagreement_error_norm}.

\begin{figure}[!hbt]
\centering
\subfigure[True states of the leader and the agents]{\includegraphics[width=1\columnwidth]{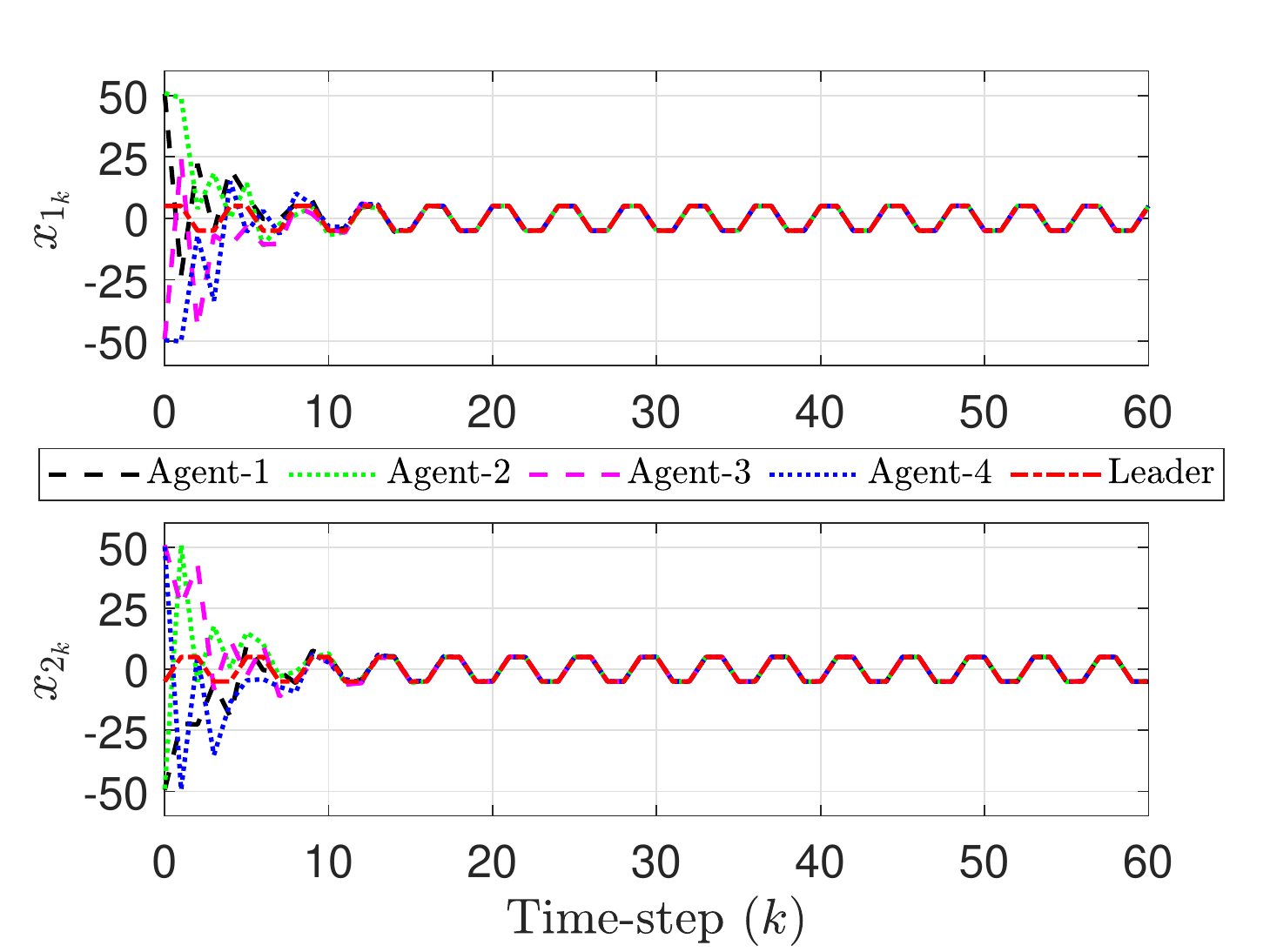} \label{Synchronization time history}}
\subfigure[Normalized global disagreement error norm]{\includegraphics[width=0.65\columnwidth]{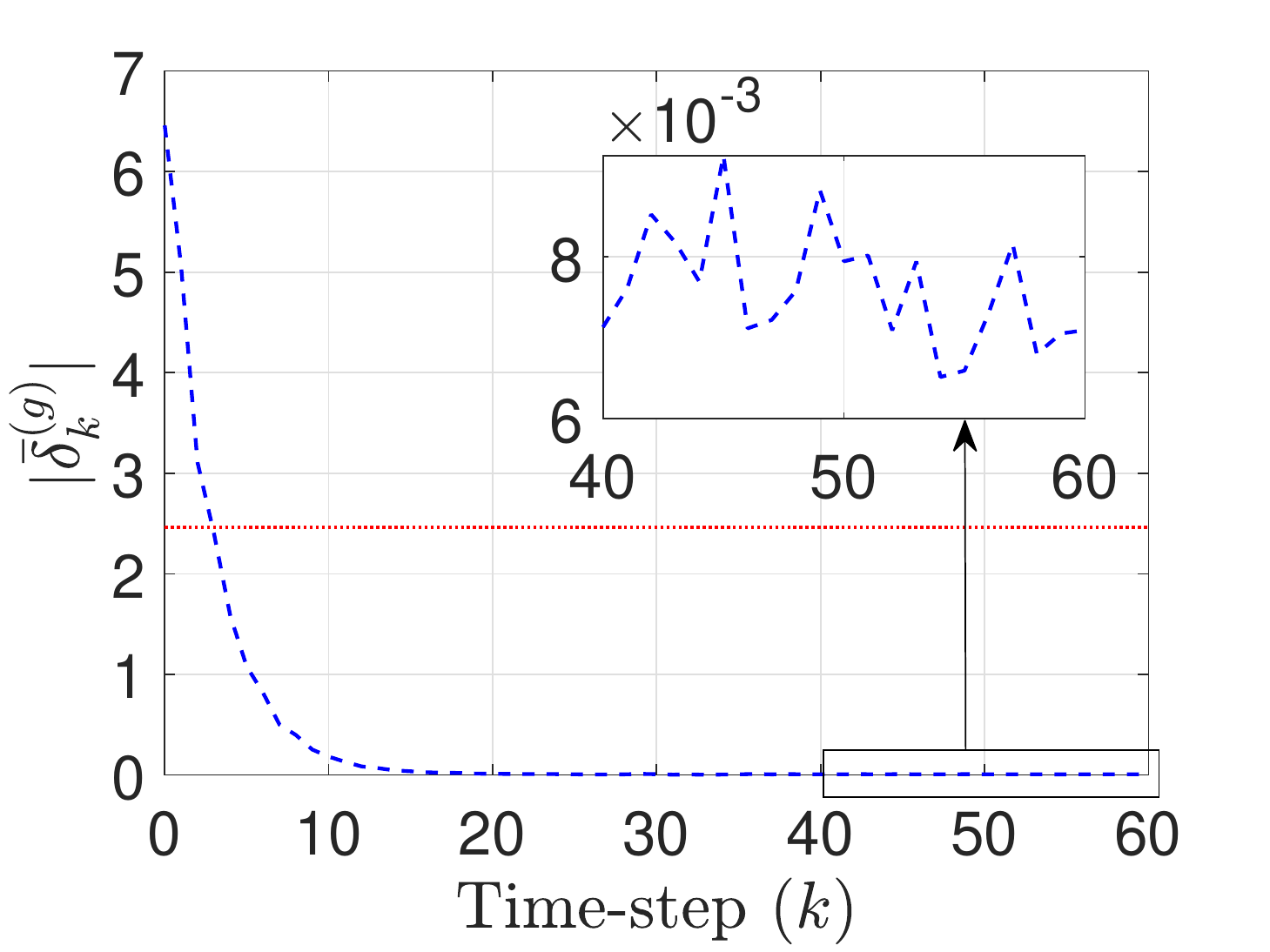} \label{Global_disagreement_error_norm}}
\subfigure[$|\bm{A}^k_c|$ and the upper bound]{\includegraphics[width=0.65\columnwidth]{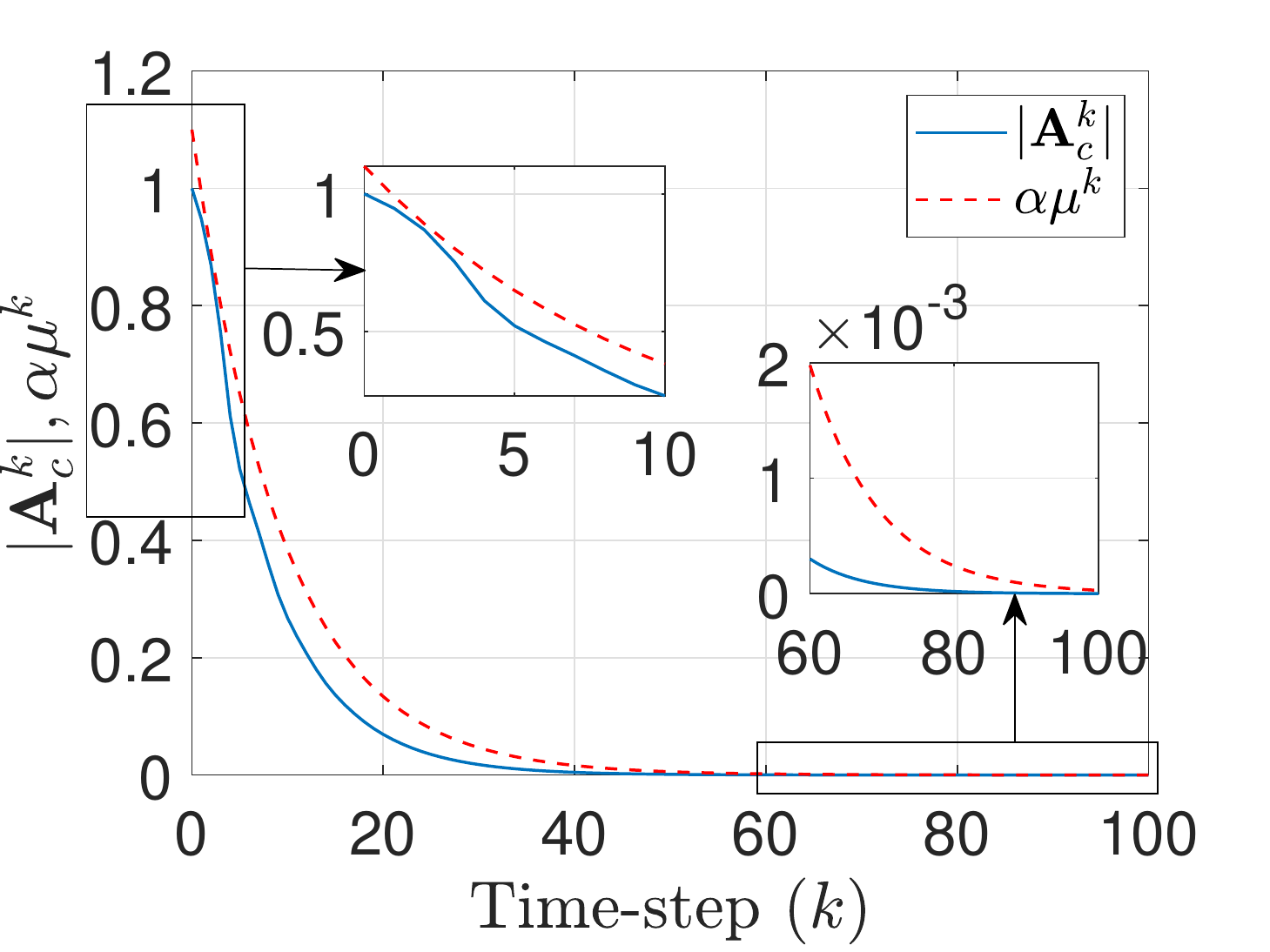} \label{Norm_Ac_upper_bound}}
\caption{Simulation results for the Example-2.} 
\label{Consensus_two_step_SMF}
\end{figure}

The estimation results corresponding to the SMFs of the agents are shown in Figs. \ref{Estimation_two_step_SMF_1}, \ref{Estimation_two_step_SMF_2}, \ref{Estimation_error_norm_agents}, \ref{Traces_of_correction_ellipsoids_agents}. The estimation errors (at the correction steps) for agent $i$'s SMF are denoted by $\bm{e}^{(i)}_k = [e^{(i)}_{1_{k}} \quad e^{(i)}_{2_{k}}]^\text{T}$ and the initial errors are denoted by $\bm{e}^{(i)}_0 = \bm{x}^{(i)}_0 - \hat{\bm{x}}^{(i)}_0$ ($i=1,2,3,4$). Figs. \ref{Estimation_two_step_SMF_1}, \ref{Estimation_two_step_SMF_2} show that the SMFs for the agents perform adequately as the estimation errors remain in a neighborhood of zero and the error bounds decrease from the respective initial values. Also, the estimation errors are contained within the error bounds which mean the SMFs of the agents are able to contain the respective true states inside the respective correction ellipsoids. The estimation error norms, shown in Fig. \ref{Estimation_error_norm_agents}, further illustrate the effectiveness of the SMFs and demonstrate that the SMFs are able to reduce the estimation errors from the respective initial values, starting from the correction step at $k=0$. The results in Fig. \ref{Estimation_error_norm_agents} essentially verify our earlier use of $||\bm{e}^{(g)}|| \leq |\bm{e}^{(g)}_{0}|$ in deriving the conservative bound in Eq. \eqref{final bound of |delta_k|}. 
\begin{figure}[!hbt]
\centering
\includegraphics[width=1\columnwidth]{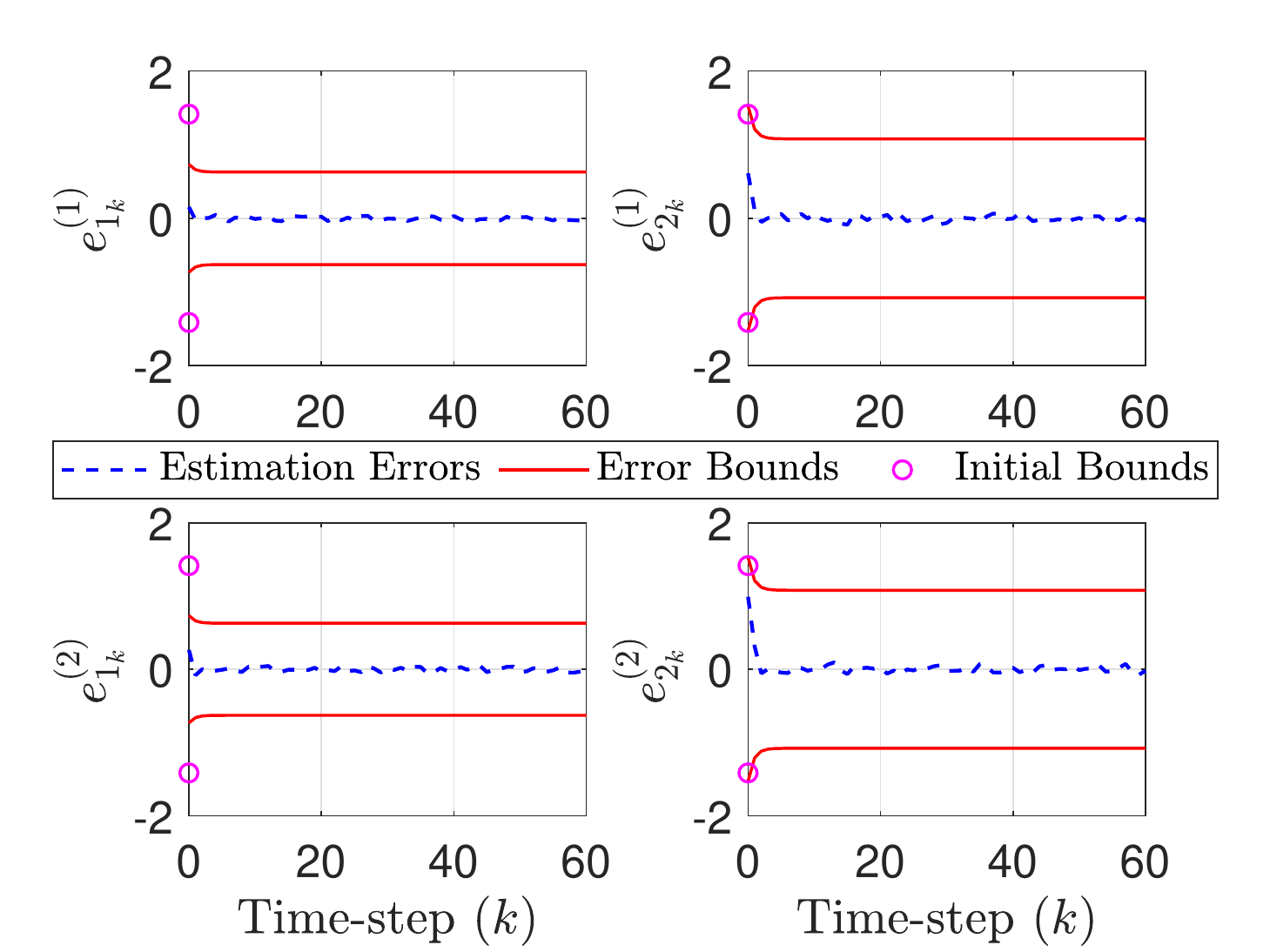}
\caption{Estimation results for SMFs of agents 1 and 2 (Example-2).} 
\label{Estimation_two_step_SMF_1}
\end{figure}
\begin{figure}[!hbt]
\centering
\includegraphics[width=1\columnwidth]{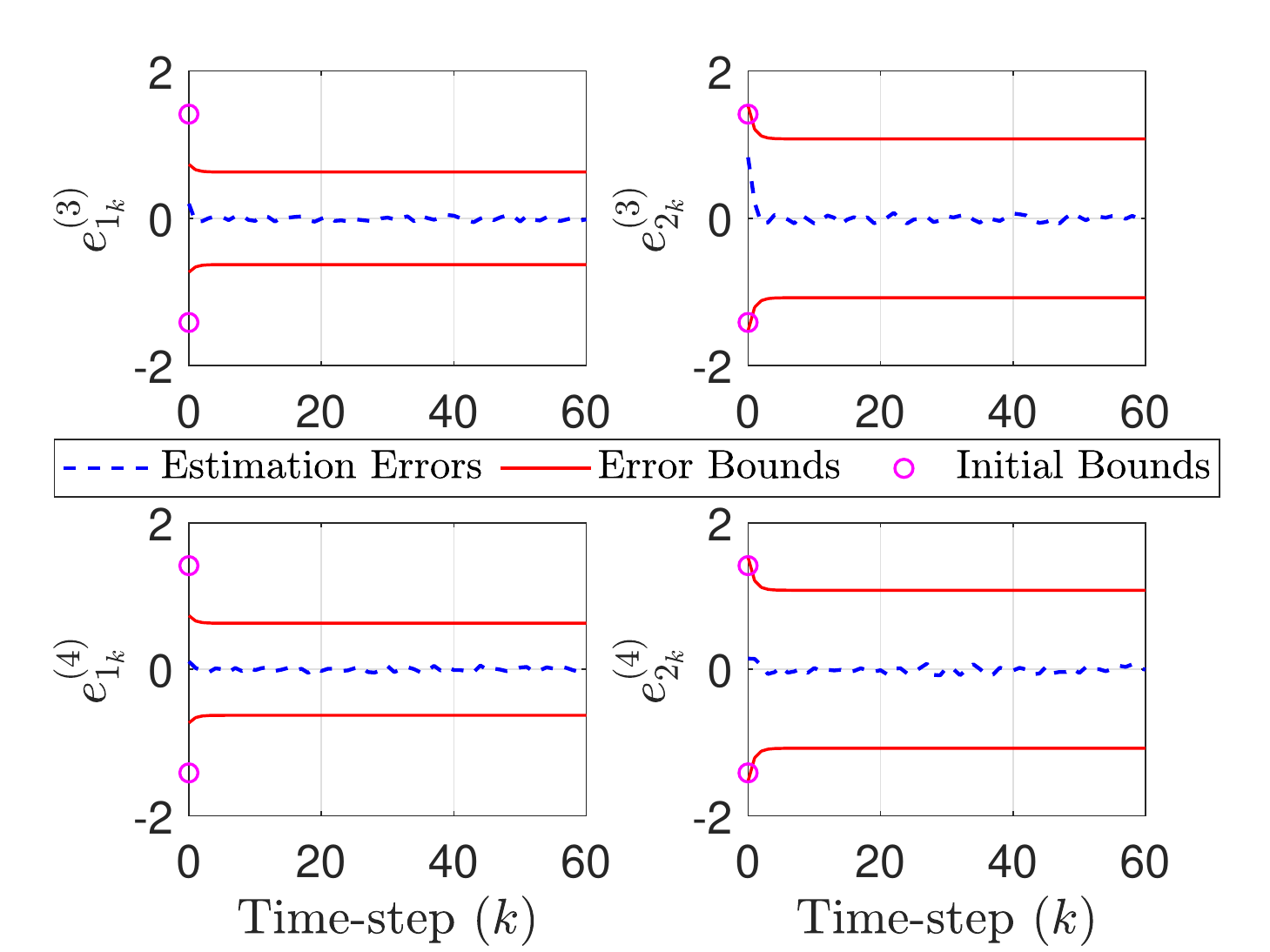}
\caption{Estimation results for SMFs of agents 3 and 4 (Example-2).} 
\label{Estimation_two_step_SMF_2}
\end{figure} 
The trace of correction ellipsoid shape matrices for the SMFs of the agents are shown in Fig. \ref{Traces_of_correction_ellipsoids_agents} where $\bm{P}^{(i)}_{k|k}$ ($i=1,2,3,4$) denote the shape matrices of agent $i$'s correction ellipsoids. Clearly, SMFs of the agents are able to reduce the trace from the initial values and construct optimal (minimum trace) correction ellipsoids at each time-step (starting from $k=0$). Quantitatively, the trace of these shape matrices converge approximately to 1.5 (see Fig. \ref{Traces_of_correction_ellipsoids_agents}), which is approximately a 2.667-fold decrease with respect to the initial trace of 4. The trends shown in Fig. \ref{Traces_of_correction_ellipsoids_agents} for all the agents are roughly the same as the same set of ellipsoidal parameters is utilized for the SMFs of all the agents and the agents have identical dynamics.
\begin{figure}[!hbt]
\centering
\includegraphics[width=1\columnwidth]{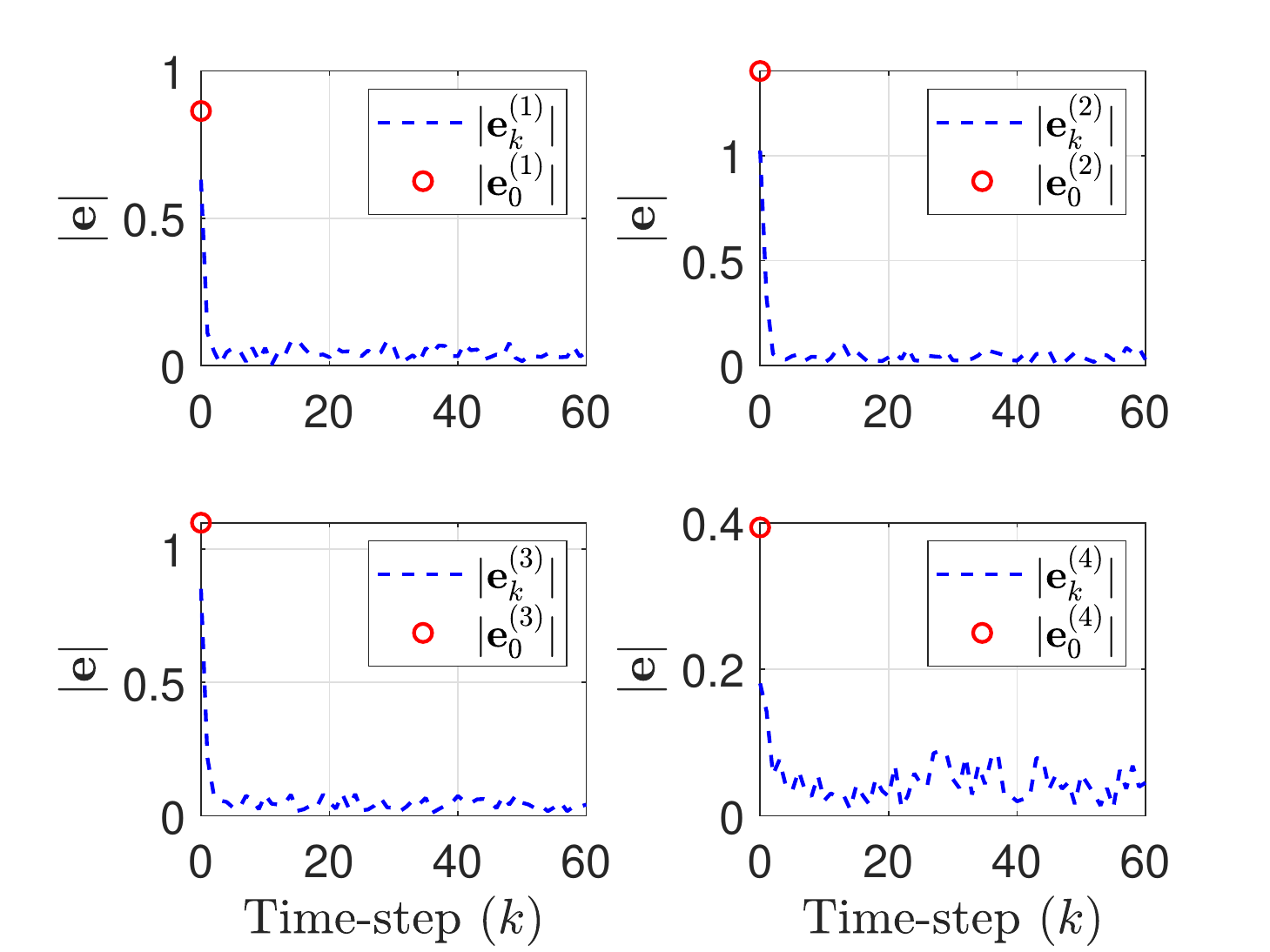}
\caption{Estimation error norms for SMFs of the agents (Example-2).} 
\label{Estimation_error_norm_agents}
\end{figure}
\begin{figure}[!hbt]
\centering
\includegraphics[width=1\columnwidth]{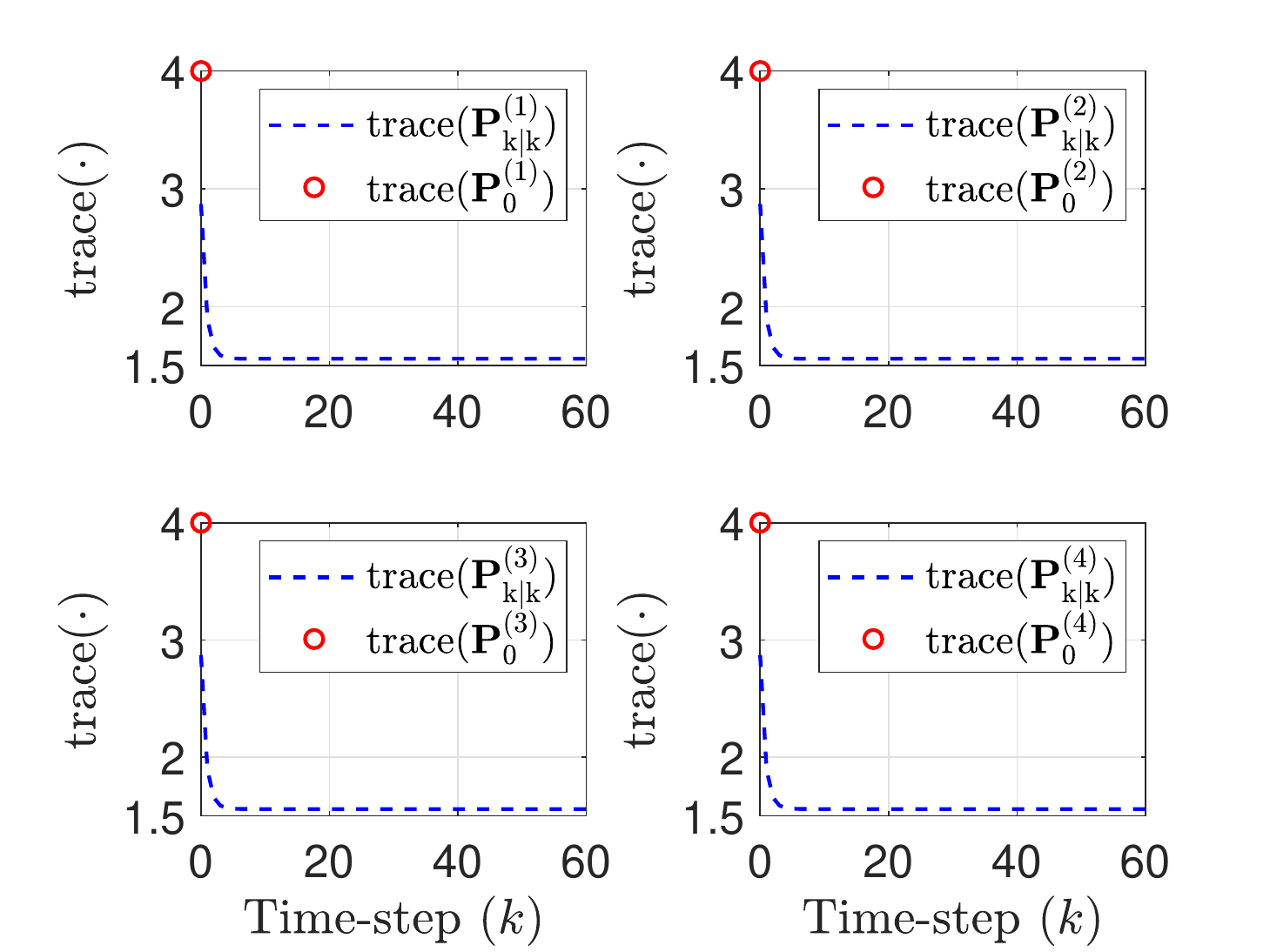}
\caption{Trace of correction ellipsoid shape matrices for SMFs of the agents (Example-2).} 
\label{Traces_of_correction_ellipsoids_agents}
\end{figure}

Finally, consider this example with different values of $\bm{w}^{(i)}_k$, $\bm{v}^{(i)}_k$, $\bm{Q}^{(i)}_k$, $\bm{R}^{(i)}_k$ ($i=1,2,3,4$) while keeping all other conditions and parameters unchanged. Now, let us allow for higher magnitudes of disturbances (with $\bm{Q}^{(i)}_k$, $\bm{R}^{(i)}_k$ properly chosen such that Assumption \ref{Assumption 2 - process and measurement noise ellipsoids} is satisfied) and compare $|\bar{\bm{\delta}}_k|$ results with the one given in Fig. \ref{Global_disagreement_error_norm}. Results of this study are given in Table \ref{Table - Global Error Norm Comparison} where the following two comparison metrics are used: (i) $\frac{1}{T} \sum_{k=0}^{T_f} |\bar{\bm{\delta}}_k|$: mean value of $|\bar{\bm{\delta}}_k|$; (ii) $\sqrt{\frac{1}{T} \sum_{k=0}^{T_f} |\bar{\bm{\delta}}_k|^2}$ : root mean square value of $|\bar{\bm{\delta}}_k|$. Also, $\bm{w}^{(i)}_k$ are chosen randomly (uniform distribution) between $-\alpha_w \bm{1}_{2}$ and $\alpha_w \bm{1}_{2}$, and $\bm{v}^{(i)}_k$ are chosen randomly (uniform distribution) between $-\alpha_v$ and $\alpha_v$. Thus, the first row in Table \ref{Table - Global Error Norm Comparison} corresponds to the result in Fig. \ref{Global_disagreement_error_norm}.
\begin{table}[!hbt] 
\centering
\caption{$|\bar{\bm{\delta}}_k|$ comparisons over $T = 61$ time-steps ($T_f = 60$)} \label{Table - Global Error Norm Comparison}
\begin{tabular}{| >{\centering\arraybackslash}m{3.5cm}|>{\centering\arraybackslash}m{1.5cm}| >{\centering\arraybackslash}m{2cm}|}
\hline
Disturbance parameters & $\frac{1}{T} \sum_{k=0}^{T_f} |\bar{\bm{\delta}}_k|$ & $ \sqrt{\frac{1}{T} \sum_{k=0}^{T_f} |\bar{\bm{\delta}}_k|^2} $ \\ \hline
$\alpha_w = \alpha_v = 0.05$, $\bm{Q}^{(i)}_k = 0.1 \bm{I}_{2}, \ \bm{R}^{(i)}_k = 0.1$ & 0.3706 & 1.1985 \\ \hline
$\alpha_w = \alpha_v = 0.5$, $\bm{Q}^{(i)}_k = \bm{I}_{2}, \ \bm{R}^{(i)}_k = 1$        & 0.4219 & 1.2052 \\ \hline
$\alpha_w = \alpha_v = 1$, $\bm{Q}^{(i)}_k = 2 \bm{I}_{2}, \ \bm{R}^{(i)}_k = 1$        & 0.4730 & 1.2124 \\ \hline
\end{tabular}
\end{table}
We observe that both the metrics in Table \ref{Table - Global Error Norm Comparison} are comparable among the three cases studied, despite the higher magnitudes of disturbances considered for the two cases in second and third rows of Table \ref{Table - Global Error Norm Comparison}. Therefore, the $|\bar{\bm{\delta}}_k|$ trends for these two cases with higher disturbance magnitudes would be qualitatively similar to the one shown in Fig. \ref{Global_disagreement_error_norm}.
\vspace{-0.5cm}
\section{Conclusion} \label{Conclusion}
A set-membership filtering-based leader-follower synchronization protocol for high-order discrete-time linear multi-agent systems has been put forward for which the global error system is shown to be input-to-state stable with respect to the input disturbances and estimation errors. A monotonically decreasing upper bound on the norm of the global disagreement error vector is calculated. Our future work would involve extending the proposed formulation for discrete-time nonlinear dynamical systems and switching network topologies. Also, we would extend the results in this paper by considering a control input for the leader or the leader to be any bounded reference trajectory. 
\vspace{-0.5cm} 
\begin{acknowledgment}
This research was supported by the Office of Naval Research under Grant No. N00014-18-1-2215.
\end{acknowledgment}
\vspace{-0.75cm}
\scriptsize{\bibliography{Bibliography}
\bibliographystyle{asmems4}}

\end{document}